\documentclass[a4paper,UKenglish,cleveref, autoref, thm-restate]{lipics-v2021}

\pdfoutput=1 
\hideLIPIcs  


\bibliographystyle{plainurl}

\title{Single Family Algebra Operation on BDDs and ZDDs Leads To Exponential Blow-Up} 


\author{Kengo Nakamura}{NTT Communication Science Laboratories, Kyoto, Japan}{kengo.nakamura@ntt.com}{https://orcid.org/0000-0002-9615-3479}{}

\author{Masaaki Nishino}{NTT Communication Science Laboratories, Kyoto, Japan}{masaaki.nishino@ntt.com}{https://orcid.org/0000-0001-6489-5446}{}
\author{Shuhei Denzumi}{NTT Communication Science Laboratories, Kyoto, Japan}{shuhei.denzumi@ntt.com}{https://orcid.org/0000-0002-0794-4157}{was supported by JSPS KAKENHI Grant Number JP23H04391.}

\authorrunning{K. Nakamura, M. Nishino, and S. Denzumi} 

\Copyright{Kengo Nakamura, Masaaki Nishino, and Shuhei Denzumi} 

\ccsdesc[100]{Theory of computation~Computational complexity and cryptography} 

\keywords{Binary decision diagrams, family of sets, family algebra} 

\category{} 

\relatedversiondetails{Full Version}{https://arxiv.org/abs/2403.05074} 


\funding{This work was generally supported by JSPS KAKENHI Grant Number JP20H05963 and JST CREST Grant Number JPMJCR22D3.}

\acknowledgements{We thank Hiromi Emoto and Shou Ooba for pointing out the issues regarding the complexity of performing family algebra operations on ZDDs.
We also thank Shin-ichi Minato, Jun Kawahara, and Norihito Yasuda for valuable discussions on this topic.
I am grateful to the reviewers of ISAAC for the comments to improve the manuscript.}

\nolinenumbers 

\EventEditors{Juli\'{a}n Mestre and Anthony Wirth}
\EventNoEds{2}
\EventLongTitle{35th International Symposium on Algorithms and Computation (ISAAC 2024)}
\EventShortTitle{ISAAC 2024}
\EventAcronym{ISAAC}
\EventYear{2024}
\EventDate{December 8--11, 2024}
\EventLocation{Sydney, Australia}
\EventLogo{}
\SeriesVolume{322}
\ArticleNo{19}

\usepackage{mathtools}
\usepackage{algorithmic}
\usepackage{graphicx}
\usepackage{textcomp}
\usepackage{enumerate}
\usepackage{multirow}
\usepackage{url}
\usepackage{pifont}
\usepackage{textcomp}
\usepackage{booktabs}
\usepackage{etoolbox}

\newcommand{\calC}{\mathcal{C}}

\newcommand{\calE}{\mathcal{E}}
\newcommand{\calF}{\mathcal{F}}
\newcommand{\calG}{\mathcal{G}}
\newcommand{\calH}{\mathcal{H}}

\newcommand{\calP}{\mathcal{P}}
\newcommand{\calQ}{\mathcal{Q}}

\newcommand{\calT}{\mathcal{T}}

\newcommand{\ZDDZ}{\mathtt{Z}}
\newcommand{\ZDDA}{\mathtt{A}}
\newcommand{\ZDDN}{\mathtt{N}}
\newcommand{\ZDDr}{\mathtt{r}}
\newcommand{\ZDDn}{\mathtt{n}}
\newcommand{\ZDDm}{\mathtt{m}}
\newcommand{\ZDDlo}{\mathsf{lo}}
\newcommand{\ZDDhi}{\mathsf{hi}}
\newcommand{\ZDDlabel}{\mathsf{lb}}

\DeclareMathOperator{\ddelta}{\boxplus}
\DeclareMathOperator{\djoin}{\dot{\bowtie}}
\DeclareMathOperator{\jjoin}{\hat{\bowtie}}
\DeclareMathOperator{\sdiv}{/}
\DeclareMathOperator{\modulus}{\%}
\DeclareMathOperator{\nonsupset}{\searrow}
\DeclareMathOperator{\nonsubset}{\nearrow}

\newcommand{\order}[1]{O( #1 )}
\newcommand{\morder}[1]{\Omega( #1 )}
\newcommand{\sorder}[1]{o( #1 )}
\newcommand{\poly}{\mathrm{poly}}
\newcommand{\comp}{\prime}

\begin{document}

\maketitle

\begin{abstract}
  Binary decision diagram (BDD) and zero-suppressed binary decision diagram (ZDD) are data structures to represent a family of (sub)sets compactly, and it can be used as succinct indexes for a family of sets.
  To build BDD/ZDD representing a desired family of sets, there are many transformation operations that take BDDs/ZDDs as inputs and output BDD/ZDD representing the resultant family after performing operations such as set union and intersection.
  However, except for some basic operations, the worst-time complexity of taking such transformation on BDDs/ZDDs has not been extensively studied, and some contradictory statements about it have arisen in the literature.
  In this paper, we show that many transformation operations on BDDs/ZDDs, including all operations for families of sets that appear in Knuth's book, cannot be performed in worst-case polynomial time in the size of input BDDs/ZDDs.
  This refutes some of the folklore circulated in past literature and resolves an open problem raised by Knuth.
  Our results are stronger in that such blow-up of computational time occurs even when the ordering, which has a significant impact on the efficiency of treating BDDs/ZDDs, is chosen arbitrarily.
\end{abstract}

\section{Introduction}
Combinatorial problems, i.e., the problems dealing with combinations of a set, frequently arise in several situations such as operations research, network analysis, and LSI design.
In solving such problems, it is often convenient to consider the set of combinations, i.e., the \emph{family of} (\emph{sub})\emph{sets}.
For example, many combinatorial optimization problems can be formulated as selecting the best combination (subset) from the family of sets satisfying constraints.
However, the number of sets in a family is possibly exponential, precluding us from explicitly retaining the family of sets.

To alleviate this issue, we can use \emph{binary decision diagram} (\emph{BDD})~\cite{bryant86} or \emph{zero-suppressed binary decision diagram} (\emph{ZDD})~\cite{minato93} that is a variant of BDD.
BDD and ZDD are data structures that compactly represent a Boolean function and a family of sets, respectively.
Since a Boolean function $f$ can be regarded as a family of sets by considering the set of assignments of input Boolean variables that evaluates $f$ to $\textit{true}$, BDD can also be regarded as a succinct representation of a family of sets.
Moreover, they support many queries about the represented family of sets, e.g., counting the number of sets and performing linear optimization over the family.
Thus, BDD and ZDD can be used as succinct indexes for a family of sets.

BDDs and ZDDs also support a number of transformation operations.
For example, when we have two BDDs representing two families of sets, we can construct a BDD representing the set union of them without extracting each set from the input families.
Using such operations, we can construct a BDD or a ZDD representing the desired family of sets.
By collecting such transformation operations, Minato~\cite{minato94} considered an algebraic system called \emph{unate cube set algebra}, whose element is a family of sets.
After that, many operations were introduced, and now the system is widely called \emph{family algebra}, whose name was given by Knuth~\cite{knuth11}.
With the algorithms performing operations on BDDs and ZDDs, every operation in the family algebra provides a useful way to construct a BDD or a ZDD representing the desired family of sets in many applications.
Many of these operations have been implemented in standard BDD and ZDD manipulation packages~\cite{Inoue2016-ae,somenzi97}, and they are used in a wide range of applications, including formal verification of circuits~\cite{Gupta2019-ap,Ito2022-fk}, analyses of power distribution networks~\cite{Inoue2015-du,Takenobu2018-yp}, and data mining~\cite{Minato2008-xa}.

However, the complexity of performing family algebra operations on BDDs and ZDDs has not been well studied, except for basic set operations.
This is because some operations require complicated recursion procedures that make complexity analysis difficult.
In particular, revealing \emph{worst-case} time complexity is important to us.
If the worst-case time complexity is large, it takes an unexpectedly long time to carry out even a \emph{single} operation for certain kinds of input.
If so, we should pay attention to the possibility of such input when we use BDDs and ZDDs as a way to implement the manipulation of families of sets.
Therefore, we investigated the worst-case time complexity of executing a single family algebra operation on BDDs and ZDDs.
Since it is known that, as described later, the sizes of a BDD and a ZDD representing the same family of sets differ in only a linear factor, this paper mainly focused on the complexity of ZDDs.
After that, we mention the complexity on BDDs.

\subsection{Related Work}
\label{ssec:related}
Since the invention of ZDD~\cite{minato93}, many family algebra operations have been proposed.
Table~\ref{tb:operations} lists basic operations.
As related work, we first describe the origins of these operations.

The first four operations in Table~\ref{tb:operations} are the most fundamental set operations set described by Minato~\cite{minato93}.
The join, quotient, and remainder operations appeared in Minato's next paper~\cite{minato94}, where the join operation is called ``product'' because a join can be considered to be the multiplication of two families when we view the union operation as an addition operation.
These operations are peculiar to the families of sets and also fundamental in defining other family algebra operations.
Later, the disjoint join and joint join operations were proposed by Kawahara et al.~\cite{kawahara16} through an extension of the join; their usage is to implicitly enumerate all of the subgraphs having a particular shape.

Restrict and permit operations were originally proposed by Coudert et al.~\cite{coudert93}, where they were called SupSet and SubSet and used for solving set cover problems or performing logic circuit minimization.
The names ``restrict'' and ``permit'' come from a study by Okuno et al.~\cite{okuno98}.
Later, nonsuperset, nonsubset, maximal, and minimal operations were introduced by Coudert~\cite{coudert97} to solve various optimization problems on graphs.
Furthermore, meet, delta, minimal hitting set, and closure operations were introduced by Knuth~\cite[\S 7.1.4 Ex.203,236,243]{knuth11} to solve various graph problems.
Table~\ref{tb:operations} contains all of the transformation operations for families of sets that appeared in Knuth's book~\cite[\S 7.1.4 Ex. 203,204,236,243]{knuth11}.

\begin{table}[tbp]
  \centering
  \caption{List of operations on family algebra.}
  \label{tb:operations}
  {\renewcommand\arraystretch{0.75}
  \footnotesize
  \setlength{\tabcolsep}{2.5pt}
  \begin{tabular}{lll}
    \toprule
    Operation & Definition & Is polytime in DD sizes? \\
    \midrule
    Union $\calF\cup\calG$ & $\{S\mid S\in\calF\vee S\in\calG\}$ & Yes~\cite{minato93}\\
    Intersection $\calF\cap\calG$ & $\{S\mid S\in\calF\wedge S\in\calG\}$ & Yes~\cite{minato93} \\
    Difference $\calF\setminus\calG$ & $\{S\mid S\in\calF\wedge S\notin\calG\}$ & Yes~\cite{minato93} \\
    Symmetric difference $\calF\oplus\calG$ & $(\calF\setminus\calG)\cup(\calG\setminus\calF)$ & Yes~\cite{minato93} \\
    \midrule
    Join $\calF\sqcup\calG$ & $\{F\cup G\mid F\in\calF, G\in\calG\}$ & \textbf{No} (Theorem~\ref{thm:join})${}^{\ast}$ \\
    Disjoint join $\calF\djoin\calG$ & $\{F\cup G\mid F\in\calF, G\in\calG, F\cap G=\emptyset\}$ & \textbf{No} (Theorem~\ref{thm:join})  \\
    Joint join $\calF\jjoin\calG$ & $\{F\cup G\mid F\in\calF, G\in\calG, F\cap G\neq\emptyset\}$ & \textbf{No} (Theorem~\ref{thm:join})  \\
    \midrule
    Meet $\calF\sqcap\calG$ & $\{F\cap G\mid F\in\calF, G\in\calG\}$ & \textbf{No} (Theorem~\ref{thm:join})${}^{\ast}$  \\
    Delta $\calF\ddelta\calG$ & $\{F\oplus G\mid F\in\calF, G\in\calG\}$ & \textbf{No} (Theorem~\ref{thm:join})${}^{\ast}$  \\
    \midrule
    Quotient $\calF\sdiv\calG$ & $\{S\mid \forall G\in\calG: S\cup G\in\calF\wedge S\cap G=\emptyset\}$ & \textbf{No} (Theorem~\ref{thm:quotient})  \\
    Remainder $\calF\modulus\calG$ & $\calF\setminus(\calG\sqcup(\calF\sdiv\calG))$ & \textbf{No} (Theorem~\ref{thm:quotient})  \\
    \midrule
    Restrict $\calF\bigtriangleup\calG$ & $\{F\in\calF \mid \exists G\in\calG:G\subseteq F\}$ & \textbf{No} (Theorem~\ref{thm:restrict})${}^{\ast}$  \\
    Permit $\calF\oslash\calG$ & $\{F\in\calF \mid \exists G\in\calG:F\subseteq G\}$ & \textbf{No} (Theorem~\ref{thm:restrict})  \\
    Nonsuperset $\calF\nonsupset\calG$ & $\{F\in\calF\mid\forall G\in\calG: G\nsubseteq F\}$ & \textbf{No} (Theorem~\ref{thm:restrict})  \\
    Nonsubset $\calF\nonsubset\calG$ & $\{F\in\calF\mid\forall G\in\calG: F\nsubseteq G\}$ & \textbf{No} (Theorem~\ref{thm:restrict})  \\
    \midrule
    Maximal $\calF^\uparrow$ & $\{F\in\calF\mid\forall F^\prime\in\calF: F\subseteq F^\prime\Rightarrow F=F^\prime\}$ & \textbf{No} (Theorem~\ref{thm:maximal})  \\
    Minimal $\calF^\downarrow$ & $\{F\in\calF\mid\forall F^\prime\in\calF: F^\prime\subseteq F\Rightarrow F=F^\prime\}$ & \textbf{No} (Theorem~\ref{thm:maximal})  \\
    \midrule
    Minimal hitting set $\calF^\sharp$ & $\{S\mid \forall F\in\calF: S\cap F\neq\emptyset\}^\downarrow$ & \textbf{No} (Theorem~\ref{thm:hitting})  \\
    Closure $\calF^\cap$ & $\{\bigcap_{S\in\calF^\prime}S\mid \calF^\prime\subseteq\calF\}$ & \textbf{No} (Theorem~\ref{thm:hitting}) \\
    \bottomrule
    \multicolumn{3}{l}{\footnotesize{${}^{\ast}$Previous studies~\cite{okuno98,knuth11} stated that they can be performed in worst-case polynomial time.}}
  \end{tabular}
  }
\end{table}

Compared to the operations themselves, the time complexity of performing them on ZDDs has not been well investigated.
Minato~\cite{minato93} proved that the first four operations in Table~\ref{tb:operations} can be performed in polynomial time with respect to the size of input ZDDs.
However, the complexity of a join operation, the most basic one among the rest, has not been fully clarified.
Knuth~\cite[\S 7.1.4 Ex. 206]{knuth11} claimed that join, as well as meet and delta, can be performed in worst-case polynomial time, but this claim lacks proof.
Conversely, Kawahara et al.~\cite{kawahara16} suggested that join, as well as disjoint join and joint join, take worst-case exponential time, again without proof.
In addition to those reports, Okuno et al.~\cite{okuno98} claimed that restrict can be performed in polynomial time, but they used the unproven proposition that join can be performed in polynomial time.
Furthermore, Knuth~\cite[\S 7.1.4 Ex. 206]{knuth11} stated that the worst-case complexity of the quotient operation was an open problem.

\subsection{Our Contribution}
\label{ssec:contribution}
In this paper, we prove that, for the operations in Table~\ref{tb:operations} aside from the first four operations, there exist polynomial-sized ZDDs such that after taking the operation, the ZDD size becomes exponential.
For example, for the join operation, we prove that there exist sequences of families of sets $\{\calF_m\}$ and $\{\calG_m\}$ such that the ZDD sizes representing $\calF_m$ and $\calG_m$ are polynomial in $m$, while the ZDD size representing $\calF_m\sqcup\calG_m$ is exponential in $m$.
This result implies that these operations cannot be performed in worst-case polynomial time with respect to the size of input ZDDs.
Thus, we refute the statement raised by Knuth~\cite{knuth11} and Okuno et al.~\cite{okuno98} that join, meet, delta, and restrict can be performed in worst-case polynomial time.
We also resolve the worst-case complexity of the quotient operation.
Moreover, we also prove that the operations in Table~\ref{tb:operations}, except for the first four operations, cannot be performed in polynomial time even when families are represented by BDDs.
Since Table~\ref{tb:operations} contains all the family algebra operations raised by Knuth~\cite{knuth11}, this paper concludes what kind of family algebra operations can be performed in polynomial time on BDDs and ZDDs.

Our result is stronger in that the resultant BDD/ZDD's size remains exponential for any \emph{order of elements}.
BDD/ZDD structures follow a total order of the elements in the base set, and it is known that this element order has a significant impact on the BDD/ZDD size.
For example, it is known that a multiplexer function can be represented in linear-sized BDD by managing the ordering while its size becomes exponential when the ordering is terrible~\cite[p.235]{knuth11}.
However, we also prove that for the sequences used in proving the above, the resultant BDD/ZDD's size is exponential in $m$ regardless of the order of elements.
This suggests that we cannot shrink the BDD/ZDD size after taking an operation by managing the element order.
Some famous BDD manipulation packages such as CUDD~\cite{somenzi97} implemented dynamic reordering, the reordering of elements after executing operations to shrink the BDD/ZDD size and thus increase the efficiency of BDD/ZDD manipulations.
Nevertheless, our results suggest that the worst-case complexity of carrying out operations cannot be polynomial, even if we employ dynamic reordering.

Note that this follows the research line of Bollig~\cite{bollig14} as follows.
Yoshinaka et al.~\cite{yoshinaka12} refuted Bryant's conjecture, which is about the complexity of performing operations on BDDs, but their counterexample was somewhat weak in that the order of elements they used was unfavorable for BDD representations.
Bollig~\cite{bollig14} later resolved this issue by proposing simpler counterexamples.
Similar to this, our results imply that the exponential blow-up in taking an operation on BDDs/ZDDs occurs not only when the order of elements is unfavorable but also when it is good for BDD/ZDD representations.

From the viewpoint of applications, BDDs/ZDDs are usually built by applying multiple family algebra operations in combination with some direct construction methods such as Simpath~\cite{knuth11} and frontier-based search~\cite{kawahara17}, which are fixed-parameter tractable algorithms with pathwidth.
However, the number of required operations stays constant in many applications.
If every operation can be performed in polynomial time, we can enjoy the polynomial time complexity in BDD/ZDD sizes even for these applications.
However, our results suggest this is not the case except for the first four operations.
In addition, although we rely on specific input examples to prove non-polynomial lower bounds, we later discuss that such blow-up may occur for other input; the detailed discussions are in Section~\ref{ssec:discussion}.
Therefore, our theoretical results have practical importance.

\section{Preliminaries}
\label{sec:prel}

\subsection{Zero-suppressed Binary Decision Diagram}
\label{ssec:zdd}
A \emph{zero-suppressed binary decision diagram} (\emph{ZDD})~\cite{minato93} is a rooted directed acyclic graph (DAG)-shaped data structure for representing a family of sets.
First, we describe the structure of ZDD.
ZDD $\ZDDZ$ consists of node set $\ZDDN$ and arc set $\ZDDA$, where the node set contains \emph{terminal} nodes $\top,\bot$ and other internal nodes.
Terminal nodes have no outgoing arcs, while every internal node has two outgoing arcs called \emph{lo-arc} and \emph{hi-arc}.
The nodes pointed by the lo-arc and the hi-arc outgoing from a node $\ZDDn$ are called \emph{lo-child} $\ZDDlo(\ZDDn)$ and \emph{hi-child} $\ZDDhi(\ZDDn)$ of $\ZDDn$.
Every internal node $\ZDDn$ is associated with an element called \emph{label} that is denoted by $\ZDDlabel(\ZDDn)$.
ZDDs must follow the \emph{ordered property}: Given a total order of elements $<$, the label of the parent node must precede that of the child node, i.e., $\ZDDlabel(\ZDDn)<\ZDDlabel(\ZDDlo(\ZDDn))$ and $\ZDDlabel(\ZDDn)<\ZDDlabel(\ZDDhi(\ZDDn))$ must hold for every internal node $\ZDDn$.
Note that the child node is always allowed to be a terminal node.
Finally, the size of a ZDD is defined by its number of nodes.

Next, we describe the semantics of ZDD.
\begin{definition}
  For ZDD node $\ZDDn$, the family $\calF_\ZDDn$ of sets represented by $\ZDDn$ is defined as follows.
  (i) If $\ZDDn=\top$, then $\calF_\ZDDn=\{\emptyset\}$.
  (ii) If $\ZDDn=\bot$, then $\calF_\ZDDn=\emptyset$.
  (iii) Otherwise, $\calF_\ZDDn=\calF_{\ZDDlo(\ZDDn)}\cup(\{\{\ZDDlabel(\ZDDn)\}\}\sqcup\calF_{\ZDDhi(\ZDDn)})$.
  Furthermore, the family of sets represented by $\ZDDZ$ is that represented by root node $\ZDDr$, where the root node is the only node having no incoming arcs.
\end{definition}
Note that $\{\emptyset\}$ and $\emptyset$ are different families; the former is the family consisting of only an empty set, while the latter is the family containing no set.
For example, Figure~\ref{fig:zdd}a is the ZDD representing the family of subsets of $\{x_1,\ldots,x_5\}$ whose cardinality is less than $3$.
Solid and dashed lines represent hi- and lo-arcs, and the element inside a circle indicates its label.

Without restrictions on the structure, there exist many ZDDs representing the same family of sets.
However, by imposing restrictions, we can obtain a \emph{canonical} ZDD, i.e., an identical ZDD structure, for every family of subsets.
This canonical form is called \emph{reduced ZDD}, and a reduced ZDD can be obtained from any ZDD by repetitively applying the following two rules.
The first rule is \emph{node sharing}: If there exist two nodes $\ZDDn$ and $\ZDDm$ whose lo-child, hi-child, and label are equal, we merge these two nodes into one (Figure~\ref{fig:zdd}b).
The second rule is \emph{zero suppression}: If there exists a node $\ZDDn$ whose hi-child is $\bot$, we eliminate $\ZDDn$ and let all of the arcs pointed to $\ZDDn$ also point to $\ZDDhi(\ZDDn)$ (Figure~\ref{fig:zdd}c).
In the reduced ZDD, no node can be eliminated by applying the above two rules.
Since applying these rules strictly decreases the size of ZDD, i.e., the number of nodes, we can deduce that the reduced ZDD of a family $\calF$ is the smallest ZDD representing $\calF$ given the total order $<$ of elements.
The size of the reduced ZDD of the family $\calF$, given the total order $<$, is denoted by $Z_<(\calF)$.
If it is clear from the context, we omit $<$ and simply write it as $Z(\calF)$.

\begin{figure}
    \centering
    \includegraphics[scale=0.8]{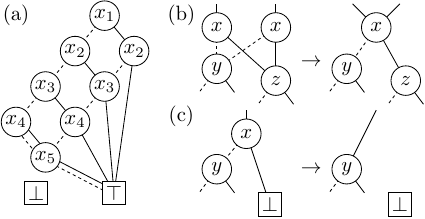}
    \caption{(a) Example of a ZDD representing the family of subsets of $\{x_1,\ldots,x_5\}$ such that the cardinality is less than $3$. (b) Schematic of node sharing. (c) Schematic of zero suppression.}
    \label{fig:zdd}
\end{figure}

We briefly compare ZDDs with BDDs.
BDD~\cite{bryant86} has the same structure (syntax) as ZDD, although its semantics is slightly different.
BDDs also follow the ordered property and have the smallest canonical form called \emph{reduced BDD}.
Given the total order $<$ of elements, the size of the reduced BDD of the family $\calF$ is denoted by $B_<(\calF)$.
The following is a famous result.
\begin{lemma}[{\cite[Eq. (126)]{knuth11}}]
  \label{lem:sizeBZ}
  For any family $\calF$ of subsets of a set of $n$ elements and any order $<$ of elements, $B_<(\calF)=\order{nZ_<(\calF)}$ and $Z_<(\calF)=\order{nB_<(\calF)}$.
\end{lemma}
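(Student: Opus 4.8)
The plan is to derive each of the two bounds by passing through a \emph{fully expanded} diagram for $\calF$ --- one in which every one of the $n$ elements of the base set labels some node on every path from the root to a terminal. The key elementary fact is that on such a diagram the BDD semantics and the ZDD semantics coincide: since no label is ever skipped, each root-to-terminal path visits one node per element and so corresponds to a complete assignment, equivalently to the subset consisting of the labels of the hi-arcs it takes; hence the family obtained by collecting these subsets over the paths that reach $\top$ is exactly the set of satisfying assignments of the Boolean function that the same graph computes when read as a BDD, so the diagram represents the same $\calF$ under either reading. Therefore, to prove $B_<(\calF)=\order{n\,Z_<(\calF)}$ it suffices to build, from the reduced ZDD of $\calF$, a fully expanded diagram of size $\order{n\,Z_<(\calF)}$ representing $\calF$ (as a ZDD, hence also as a BDD): since the reduced BDD is the smallest BDD representing $\calF$ respecting $<$, the bound follows. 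The bound $Z_<(\calF)=\order{n\,B_<(\calF)}$ is obtained symmetrically, building a fully expanded diagram from the reduced BDD.

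To go from the reduced ZDD $R$ of $\calF$ (which has $Z_<(\calF)$ nodes, hence at most $2Z_<(\calF)$ arcs) to such a diagram, I would process every arc whose endpoints lie more than one level apart --- including arcs entering a terminal, and a virtual arc above the root when the root's label is not the $<$-minimum element --- by splicing into it a chain of at most $n-1$ fresh nodes carrying the missing labels in increasing order, each routing its lo-arc to the next node of the chain (finally to the original target) and its hi-arc into a single shared auxiliary chain that represents the empty family. This exactly undoes the zero-suppression rule, so it changes neither the family represented when read as a ZDD nor the ordered property, and afterwards no label is skipped; the node count is at most $Z_<(\calF)+n\,(2Z_<(\calF)+1)+n=\order{n\,Z_<(\calF)}$. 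The symmetric construction starts from the reduced BDD of $\calF$ and splices in chains of ``don't-care'' nodes instead (both children pointing to the next node of the chain, finally to the original target), which undoes the BDD elimination rule and hence preserves the Boolean function computed, yielding a fully expanded diagram of size $\order{n\,B_<(\calF)}$.

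I expect the only step requiring a genuine (though short) argument to be the semantics-coincidence fact above, namely that a fully expanded diagram represents the same family whether read as a BDD or as a ZDD; this is proved by induction on the levels, unfolding the ZDD recursion along complete paths and matching it against the BDD reading, and the awkward mismatch between the two reduction conventions (``hi-arc to $\bot$'' for ZDDs versus ``lo-arc equal to hi-arc'' for BDDs) never bites because on a fully expanded diagram a hi-arc to $\bot$ contributes no set under either reading while the only arcs into $\top$ occur at the last level, where the two readings trivially agree. Everything else is routine bookkeeping: checking that undoing zero-suppression (resp.\ elimination) preserves the ZDD-family (resp.\ Boolean function), counting the spliced nodes, verifying that the labels still respect $<$, and dealing with the boundary cases of arcs into terminals and the virtual root arc.
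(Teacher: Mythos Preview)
The paper does not prove this lemma at all: it is quoted as a known result from Knuth~\cite[Eq.~(126)]{knuth11} and no argument is supplied. So there is nothing in the paper to compare your proposal against.

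That said, your proposal is correct and is essentially the standard argument (the one Knuth gives). Passing through a fully expanded (``quasi-reduced'') diagram on which BDD and ZDD semantics coincide, and observing that undoing zero-suppression (resp.\ the BDD elimination rule) costs at most a factor of $n$ per arc, is exactly how the linear-factor equivalence is established. Your treatment of the boundary cases (arcs into terminals, a virtual arc above the root, a single shared ``dead'' chain to absorb the hi-arcs of inserted nodes so that the diagram stays fully expanded) is the right way to keep the bookkeeping honest, and your semantics-coincidence claim is straightforward once every root-to-terminal path visits each label exactly once. The node count you give is a slight overcount but of the right order, which is all that matters here.
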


\subsection{Family Algebra Operations on ZDDs}
\label{ssec:familyZDD}
In this section, we explain how the family algebra operations are performed using ZDDs and point out what makes the difference between the basic set operations (union, intersection, difference, and symmetric difference) and the other operations.

As explained in Section~\ref{ssec:zdd}, ZDD represents a family of sets in a recursive manner.
Let us consider the situation in which there are two ZDDs whose root nodes are $\ZDDn$ and $\ZDDm$ and $\ZDDlabel(\ZDDn)=\ZDDlabel(\ZDDm)=x$.
Then, the family of sets represented by them are $\calF_\ZDDn=\calF_{\ZDDlo(\ZDDn)}\cup(\{\{x\}\}\sqcup\calF_{\ZDDhi(\ZDDn)})$ and $\calF_\ZDDm=\calF_{\ZDDlo(\ZDDm)}\cup(\{\{x\}\}\sqcup\calF_{\ZDDhi(\ZDDm)})$.
The union of them is
\begin{equation}
  \calF_\ZDDn\cup\calF_\ZDDm = [\calF_{\ZDDlo(\ZDDn)}\cup\calF_{\ZDDlo(\ZDDm)}]\cup[\{\{x\}\}\sqcup(\calF_{\ZDDhi(\ZDDn)}\cup\calF_{\ZDDhi(\ZDDm)})]. \label{eq:union}
\end{equation}
This means that the ZDD representing $\calF_\ZDDn\cup\calF_\ZDDm$ can be described as follows: The root node's label is $x$, its lo-child represents $\calF_{\ZDDlo(\ZDDn)}\cup\calF_{\ZDDlo(\ZDDm)}$, and its hi-child represents $\calF_{\ZDDhi(\ZDDn)}\cup\calF_{\ZDDhi(\ZDDm)}$.
If $\ZDDlabel(\ZDDn)<\ZDDlabel(\ZDDm)$, we have a simpler recursion:
\begin{equation}
  \calF_\ZDDn\cup\calF_\ZDDm = [\calF_{\ZDDlo(\ZDDn)}\cup\calF_\ZDDm]\cup[\{\{\ZDDlabel(\ZDDn)\}\}\sqcup(\calF_{\ZDDhi(\ZDDn)}\cup\calF_\ZDDm)].
  \label{eq:union2}
\end{equation}
The case of $\ZDDlabel(\ZDDm)<\ZDDlabel(\ZDDn)$ can be handled in the same way.
By recursively expanding $\calF_\ZDDn\cup\calF_\ZDDm$ by (\ref{eq:union}) and (\ref{eq:union2}), we eventually reach terminal nodes where the union is trivial, e.g., $\calF_{\bot}\cup\calF_{\top}=\{\emptyset\}$.
Therefore, by caching the resultant ZDD nodes of $\calF_{\ZDDn^\prime}\cup\calF_{\ZDDm^\prime}$, where $\ZDDn^\prime$ and $\ZDDm^\prime$ are the child nodes of $\ZDDn$ and $\ZDDm$, respectively, we can efficiently compute the ZDD representing $\calF_\ZDDn\cup\calF_\ZDDm$.
With the cache, one can show that we can build a ZDD representing the union of two ZDDs in a time proportional to the product of input ZDD sizes.
The intersection, difference, and symmetric difference operations can be handled in almost the same way.

The other operations can also be performed in a recursive manner.
However, the recursion becomes more complicated.
Let us consider, for example, the join operation.
When $\ZDDlabel(\ZDDn)=\ZDDlabel(\ZDDm)=x$, the join becomes
\begin{equation}
  \begin{aligned}
  \calF_\ZDDn\sqcup\calF_\ZDDm = & [\calF_{\ZDDlo(\ZDDn)}\cup(\{\{x\}\}\sqcup\calF_{\ZDDhi(\ZDDn)})]\sqcup[\calF_{\ZDDlo(\ZDDm)}\cup(\{\{x\}\}\sqcup\calF_{\ZDDhi(\ZDDm)})] \\
  = & [\calF_{\ZDDlo(\ZDDn)}\sqcup\calF_{\ZDDlo(\ZDDm)}]\cup [\calF_{\ZDDlo(\ZDDn)}\sqcup(\{\{x\}\}\sqcup\calF_{\ZDDhi(\ZDDm)})]\cup \\
  &\ [(\{\{x\}\}\sqcup\calF_{\ZDDhi(\ZDDn)})\sqcup\calF_{\ZDDlo(\ZDDm)}]\cup [(\{\{x\}\}\sqcup\calF_{\ZDDhi(\ZDDn)})\sqcup(\{\{x\}\}\sqcup\calF_{\ZDDhi(\ZDDm)})] \\
  = & [\calF_{\ZDDlo(\ZDDn)}\sqcup\calF_{\ZDDlo(\ZDDm)}]\cup [\{\{x\}\}\sqcup(\calF_{\ZDDlo(\ZDDn)}\sqcup\calF_{\ZDDhi(\ZDDm)})]\cup\\
  &\ [\{\{x\}\}\sqcup(\calF_{\ZDDhi(\ZDDn)}\sqcup\calF_{\ZDDlo(\ZDDm)})]\cup [\{\{x\}\}\sqcup(\calF_{\ZDDhi(\ZDDn)}\sqcup\calF_{\ZDDhi(\ZDDm)})] \\
  = & [\calF_{\ZDDlo(\ZDDn)}\!\sqcup\!\calF_{\ZDDlo(\ZDDm)}]\!\cup\![\{\{x\}\}\sqcup((\calF_{\ZDDlo(\ZDDn)}\!\sqcup\!\calF_{\ZDDhi(\ZDDm)})\cup(\calF_{\ZDDhi(\ZDDn)}\!\sqcup\!\calF_{\ZDDlo(\ZDDm)})\cup(\calF_{\ZDDhi(\ZDDn)}\!\sqcup\!\calF_{\ZDDhi(\ZDDm)}))].
  \end{aligned}
\end{equation}
Here, the second equality holds because join distributes over the union.
This means that we should build a ZDD where the root node's lo-child represents $\calF_{\ZDDlo(\ZDDn)}\sqcup\calF_{\ZDDlo(\ZDDm)}$ and its hi-child represents $(\calF_{\ZDDlo(\ZDDn)}\sqcup\calF_{\ZDDhi(\ZDDm)})\cup(\calF_{\ZDDhi(\ZDDn)}\sqcup\calF_{\ZDDlo(\ZDDm)})\cup(\calF_{\ZDDhi(\ZDDn)}\sqcup\calF_{\ZDDhi(\ZDDm)})$.
Thus, in the recursion, we should also compute the union $\cup$ of families, which also needs a recursion like that above.
Another example is the restrict operation.
Restrict can be computed as
\begin{equation}
  \begin{aligned}
  \calF_\ZDDn\bigtriangleup\calF_\ZDDm = & [\calF_{\ZDDlo(\ZDDn)}\bigtriangleup\calF_{\ZDDlo(\ZDDm)}]\cup[\{\{x\}\}\sqcup(\calF_{\ZDDhi(\ZDDn)}\bigtriangleup(\calF_{\ZDDlo(\ZDDm)}\cup\calF_{\ZDDhi(\ZDDm)}))].
  \end{aligned}
\end{equation}
Thus, it is also necessary to compute the union of families as well as restrict.

Compared to the simple recursion for the computation of basic set operations, the complexity of such ``double recursion'' procedures are difficult to analyze.

\section{Blow-Up Operations}
\label{sec:main}
\subsection{High-Level Idea}
\label{ssec:highlevel}
As described in Section~\ref{ssec:familyZDD}, the ZDD size after performing union or intersection can be bounded by the product of the sizes of operand ZDDs, i.e., $Z(\calF\cup\calG)=\order{Z(\calF)Z(\calG)}$ and $Z(\calF\cap\calG)=\order{Z(\calF)Z(\calG)}$.
Thus, the ZDD of the union or intersection of \emph{two} ZDDs remains polynomial-sized when the operand ZDDs have polynomial size.
However, this does not hold for a non-constant number of ZDDs: even if $Z(\calF_k)=\order{\poly(m)}$ for $k=1,\ldots,m$, both $Z(\bigcup_{k=1}^{m}\calF_k)$ and $Z(\bigcap_{k=1}^{m}\calF_k)$ may become exponential in $m$.

We use such families to constitute examples of blow-up.
More specifically, for each operation, we constitute an example such that performing this operation incurs the union or intersection of multiple families.
Since we prove that the reduced ZDD representing the result of an operation will become exponential in size, we can confirm that \emph{any} algorithm for computing the resultant ZDD incurs worst-case non-polynomial complexity.
Combined with concrete instances, we prove that the worst-case complexity of family algebra operations is lower-bounded by an exponential factor.

We use the specific families of sets, hidden weighted bit function and permutation function, as explained below.
Note that they are called ``function'' because they are originally defined as a Boolean function, but we here describe them as equivalent families of sets.

\begin{definition}
  \label{def:HWB}
  A \emph{hidden weighted bit function} $\calH_m$ is a family of sets defined as $\{S\subseteq\{y_1,\ldots,y_m\}\mid y_{|S|}\in S\}$.
\end{definition}
The hidden weighted bit function $\calH_m$ can be represented as a union of elementary families.
Define $\calE_{m,k}\coloneqq \{S\subseteq\{y_1,\ldots,y_m\}\mid |S|=k, y_k\in S\}$, i.e., $\calE_{m,k}$ consists of the subsets of $\{y_1,\ldots,y_m\}$ where the cardinality is $k$ and $y_k$ is contained.
Then, $\calH_m=\bigcup_{k=1}^{m}\calE_{m,k}$.
It can be easily verified that the size of the ZDD representing $Z(\calE_{m,k})$ is $\order{m^2}$ for any order of elements (see Section~\ref{ssec:polyZDD}).
However, it is known that the ZDD representing $\calH_m$ must become exponential in size.
\begin{theorem}[\cite{bryant91}]
  For any order $<$ of elements, $B_<(\calH_m)=\morder{2^{m/5}}$. Thus, by Lemma~\ref{lem:sizeBZ}, $Z_<(\calH_m)=\morder{2^{m/5}/m}$.
\end{theorem}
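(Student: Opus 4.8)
This is the classical lower bound of Bryant for the hidden weighted bit function, and the plan is to recall the standard \emph{cofactor-counting} argument for BDD sizes. Fix the order $<$ and write $y_{i_1}<y_{i_2}<\cdots<y_{i_m}$; for a prefix $T=\{y_{i_1},\dots,y_{i_k}\}$ with complement $\bar T=\{y_1,\dots,y_m\}\setminus T$ and a subset $a\subseteq T$, let $\calH_{m,a}=\{\,S\subseteq\bar T : a\cup S\in\calH_m\,\}$ be the corresponding cofactor. In a BDD for $\calH_m$ respecting $<$, tracing the path for each of the $2^{k}$ choices of $a$ reaches a node from which the cofactor $\calH_{m,a}$ is computed, so distinct cofactors force distinct nodes; hence $B_<(\calH_m)$ is at least the number of distinct cofactors arising from any single prefix $T$. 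It therefore suffices to exhibit, for a well-chosen prefix $T$, a family of $2^{m/5}$ subsets of $T$ inducing pairwise distinct cofactors.

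First I would take $T$ to be (roughly) the first half of the order, so $|T|,|\bar T|\approx m/2$, and exploit the arithmetic of $\calH_m$: for $a\subseteq T$ with $|a|=w$ and $S\subseteq\bar T$ with $|S|=v$ we have $a\cup S\in\calH_m$ iff $y_{w+v}\in a\cup S$, and when $y_{w+v}\in T$ this is just the test $y_{w+v}\in a$. The key point is that an index $i$ with $y_i\in T$ is \emph{exposed at weight $w$} exactly when $w\le i\le w+|\bar T|$: then some $S\subseteq\bar T$ has $|S|=i-w$, and for every $a\subseteq T$ of weight $w$, $a\cup S\in\calH_m\iff y_i\in a$. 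So fix a target weight $w_0$, put $P=\{\,i : y_i\in T,\ w_0\le i\le w_0+|\bar T|\,\}$, and consider the weight-$w_0$ subsets of $\{y_i : i\in P\}$. Any two distinct such subsets $a\neq a'$ differ at some exposed index $i$, say $y_i\in a$ and $y_i\notin a'$; picking $S\subseteq\bar T$ with $|S|=i-w_0$ gives $a\cup S\in\calH_m$ but $a'\cup S\notin\calH_m$, so $S$ separates the cofactors. This yields $B_<(\calH_m)\ge\binom{|P|}{w_0}$.

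It then remains to pick $w_0$ (and the exact prefix length) so that $\binom{|P|}{w_0}\ge 2^{m/5}$ for every order $<$. A sliding-window/averaging estimate does this: as $w_0$ runs over $\{0,\dots,m\}$, the window $[w_0,w_0+|\bar T|]$ of length $\approx m/2$ must, for some $w_0$, contain a constant fraction of the $T$-indices; choosing $w_0$ near the midpoint of the associated Hamming ball makes $\binom{|P|}{w_0}$ exponential, and a careful optimization of the split point and the ball radius---kept deliberately conservative about floors, ceilings, and the degenerate weight-$0$ case---brings the guaranteed value down to the clean bound $2^{m/5}$. The ZDD statement follows at once from Lemma~\ref{lem:sizeBZ}: $Z_<(\calH_m)=\morder{B_<(\calH_m)/m}=\morder{2^{m/5}/m}$.

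The principal obstacle is the coupling between ``which subsets we may vary'' and ``which index is tested'': flipping an element of $a$ changes $|a|$ and so moves the tested index, which is why the construction cannot vary elements freely and must stay inside a single weight class $w_0$ supported on a common window $P$ of simultaneously exposed indices. Guaranteeing that such a window $P$ is large no matter how adversarially $<$ interleaves small with large indices and prefix with suffix positions is exactly what the averaging argument buys, and nailing the explicit constant $1/5$---rather than merely a $2^{\morder{m}}$ bound---is the careful bookkeeping of this trade-off.
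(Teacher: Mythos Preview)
The paper does not prove this theorem at all: it is stated with a citation to Bryant~\cite{bryant91} and used as a black box, with the ZDD consequence derived immediately from Lemma~\ref{lem:sizeBZ}. There is therefore no ``paper's own proof'' to compare against beyond the one-line reduction $Z_<(\calH_m)=\morder{B_<(\calH_m)/m}$, which you reproduce correctly.

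Your sketch is a faithful outline of Bryant's original argument: cofactor counting on a prefix of the order, restriction to a single weight class to decouple the tested index from the set being varied, and an averaging/window argument to guarantee enough exposed indices regardless of the adversarial order. The separation step is correct as written: if $a,a'\subseteq\{y_i:i\in P\}$ both have weight $w_0$ and differ at some $i\in P$, then any $S\subseteq\bar T$ with $|S|=i-w_0$ witnesses $\calH_{m,a}\neq\calH_{m,a'}$. The only soft spot is the final optimization: you state that ``a careful optimization \dots brings the guaranteed value down to the clean bound $2^{m/5}$'' without carrying it out, and there is a mild circularity in that $w_0$ is fixed before $P$ is defined yet must also satisfy $w_0\le|P|$ and sit near $|P|/2$ for $\binom{|P|}{w_0}$ to be large. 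Bryant resolves this by a concrete choice of split point and weight (roughly, take $|T|$ near $m/2$ and argue that some window of length $|\bar T|$ contains at least $m/5$ indices of $T$ with $w_0$ chosen appropriately), and that case analysis is what pins down the constant $1/5$. You have correctly identified this as the bookkeeping step; filling it in would complete the proof.
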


\begin{definition}
  \label{def:permutation}
  A \emph{permutation function} $\calP_m$ is a family of subsets of $\{y_1,\ldots,y_{m^2}\}$ such that (i) there is exactly one element from $y_{m(i\!-\!1)\!+\!1},y_{m(i\!-\!1)\!+\!2},\ldots,y_{m(i\!-\!1)\!+\!m}$ for $i=1,\ldots,m$, and (ii) there is exactly one element from $y_{j},y_{m\!+\!j},\ldots,y_{m(m\!-\!1)\!+\!j}$ for $j=1,\ldots,m$.
\end{definition}
The permutation function $\calP_m$ is equivalent to the set of permutations:
For $S\subseteq \{y_1,\ldots,y_{m^2}\}$, we associate a binary $m\times m$ matrix where the $(i,j)$-element is $1$ if and only if $y_{m(i-1)+j}\in S$.
Then, $S\in\calP_m$ if and only if the associated matrix is a permutation matrix.

For $k=1,\ldots,m$, let $\calQ_{m,k}$ be the family of subsets of $\{y_1,\ldots,y_{m^2}\}$ such that there is exactly one element from $y_{m(k\!-\!1)\!+\!1},y_{m(k\!-\!1)\!+\!2},\ldots,y_{m(k\!-\!1)\!+\!m}$, and let $\calQ_{m,m+k}$ be those such that there is exactly one element from $y_{k},y_{m\!+\!k},\ldots,y_{m(m\!-\!1)\!+\!k}$.
Then, $\calP_m=\bigcap_{k=1}^{2m}\calQ_{m,k}$.
Here, $Z(\calQ_{m,k})=\order{m^2}$ for any order of elements, as proved in Section~\ref{ssec:polyZDD}.
However, it is again proved that the ZDD representing $\calP_m$ must become exponential in size.
\begin{theorem}[{\cite[Theorem K]{knuth11}}]
  For any order $<$ of elements, $B_<(\calP_m)=\morder{m2^m}$. Thus, by Lemma~\ref{lem:sizeBZ}, $Z_<(\calP_{m})=\morder{2^m/m}$.
\end{theorem}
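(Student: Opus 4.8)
The plan is to establish, for every order $<$, the exponential lower bound $B_<(\calP_m)=\morder{2^{cm}}$ for an absolute constant $c>0$; the displayed bounds $B_<(\calP_m)=\morder{m2^m}$ and, via Lemma~\ref{lem:sizeBZ}, $Z_<(\calP_m)=\morder{2^m/m}$ are quantitative refinements that I comment on at the end. I would use the classical ``cut and count cofactors'' technique: in a reduced BDD for a function $f$ with order $<$, if $\alpha$ ranges over all assignments to the first $i$ variables in the order, then the number of distinct cofactors $f|_\alpha$ is at most the BDD size, since tracing $\alpha$ from the root reaches the unique node computing $f|_\alpha$. So it suffices to find a single cut after which $\calP_m$ has $2^{\morder{m}}$ distinct cofactors, whatever the order. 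Fix an arbitrary order, identify the $m^2$ variables with the cells of an $m\times m$ grid as in Definition~\ref{def:permutation} (so members of $\calP_m$ are exactly the permutation matrices), and let $E$ be the set of cells occupying the first $\lfloor m^2/2\rfloor$ positions (the ``early'' cells); note $m^2/4\le|E|\le 3m^2/4$.

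The core is a reduction to permutation matrices. For a permutation $\pi$ write $\pi^E$ for the set of its $1$-cells lying in $E$, and let $R_\pi\subseteq[m]$ and $C_\pi\subseteq[m]$ be the sets of rows and of columns met by $\pi^E$. Let $\alpha_\pi$ be the assignment that turns on exactly the cells of $\pi^E$ among the early cells. A direct check shows that $\calP_m|_{\alpha_\pi}$, as a function of the remaining (``late'') cells, is simply the indicator that the late cells turned on form a perfect matching between $[m]\setminus R_\pi$ and $[m]\setminus C_\pi$; hence $\calP_m|_{\alpha_\pi}$ is determined by $(R_\pi,C_\pi)$, and two permutations with different such pairs give different cofactors (on the assignment turning on the late $1$-cells of $\pi$, the first cofactor is $1$ and the second is $0$, as those cells meet the wrong rows or columns). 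So it is enough to exhibit $2^{\morder{m}}$ permutations with pairwise distinct pairs $(R_\pi,C_\pi)$.

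These I would build from disjoint transpositions of a well-chosen base permutation. Call a $2\times2$ block (a choice of two rows and two columns) \emph{mixed} if its four cells are neither all early nor all late. If $\pi_0$ is a permutation and $(a_1\,b_1),\dots,(a_k\,b_k)$ are disjoint transpositions whose ``patches'' $\{a_\ell,b_\ell\}\times\{\pi_0(a_\ell),\pi_0(b_\ell)\}$ are all mixed, then the $2^k$ permutations obtained from $\pi_0$ by swapping the images of $a_\ell$ and $b_\ell$ for $\ell$ in an arbitrary subset of $[k]$ have pairwise distinct $(R_\pi,C_\pi)$: a short case analysis shows that flipping one transposition changes the restriction of $(R_\pi,C_\pi)$ to that patch's two rows and two columns precisely because the patch is mixed. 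So the whole proof reduces to one purely combinatorial statement: for every $2$-colouring of an $m\times m$ grid in which each colour class has between $m^2/4$ and $3m^2/4$ cells, there are $\morder{m}$ mixed $2\times2$ blocks with pairwise disjoint row-pairs and column-pairs (and any such family can be realised as patches along a common $\pi_0$, by defining $\pi_0$ to match each block's rows to its columns and extending arbitrarily). I would prove this greedily: a grid with \emph{no} mixed $2\times2$ block is monochromatic (if every $2\times2$ sub-block is constant, then a single cell forces its whole row and column to be constant, and then the whole grid); hence a maximal family of pairwise-disjoint mixed blocks of size $t$ leaves an unused $(m-2t)\times(m-2t)$ subgrid that is monochromatic, so one colour class has at least $(m-2t)^2$ cells, which with the balance condition forces $(m-2t)^2\le 3m^2/4$ and thus $t=\morder{m}$.

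The step I expect to be the main obstacle is exactly this combinatorial claim about mixed blocks: the ``flipping a mixed patch changes the signature'' case analysis has to be carried out carefully (each of the patch's four colours influences both a row-bit and a column-bit of $(R_\pi,C_\pi)$), and the greedy argument has to be made uniform over all colourings; the remaining steps are bookkeeping. The sharper displayed bounds $B_<(\calP_m)=\morder{m2^m}$ and $Z_<(\calP_m)=\morder{2^m/m}$ follow from a more careful version of the same count — a combinatorial analysis yielding $2^m/\poly(m)$ rather than merely $2^{\morder{m}}$ distinct cofactors, applied at $\morder{m}$ cut positions with disjoint windows of freshly-read variables so that the node counts add — which I would run as in Knuth~\cite{knuth11}, or simply cite; for the purposes of this paper the exponential bound $Z_<(\calP_m)=2^{\morder{m}}$, obtained from $B_<(\calP_m)=2^{\morder{m}}$ above together with Lemma~\ref{lem:sizeBZ}, already suffices.
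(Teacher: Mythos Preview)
The paper does not prove this theorem at all; it is simply quoted from Knuth~\cite[Theorem~K]{knuth11} and used as a black box. So there is no in-paper proof to compare against, and in that sense you have done more than the paper does by giving a self-contained argument for an exponential lower bound.

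Your argument is sound. The cofactor-counting framework is the standard one; your key observation that the cofactor $\calP_m|_{\alpha_\pi}$ is determined by $(R_\pi,C_\pi)$ and that distinct pairs are separated by the late part $\beta_1$ of $\pi_1$ is correct (your verbal description of the cofactor as ``a perfect matching between $[m]\setminus R_\pi$ and $[m]\setminus C_\pi$'' is slightly imprecise, since only \emph{late} cells are available, but this does not affect the separating-assignment step). The case analysis that flipping a mixed $2\times 2$ patch changes the restricted signature $(R_\pi,C_\pi)$ goes through: equality of the two local signatures forces all four cells of the patch to share a colour, contradicting mixedness. The greedy extraction of $\Omega(m)$ row- and column-disjoint mixed patches is also fine: a residual subgrid with no mixed $2\times2$ block is monochromatic, and the balance $|E|\approx m^2/2$ then bounds its side length below $m/\sqrt2+o(m)$.

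What you obtain is $B_<(\calP_m)=2^{\Omega(m)}$ and hence $Z_<(\calP_m)=2^{\Omega(m)}$, which, as you note, already suffices for every qualitative conclusion drawn later in the paper; the sharper constants in the displayed bound are only used to write explicit $\Omega(2^m/\poly(m))$ expressions in Theorems~\ref{thm:restrict}--\ref{thm:hitting}, and you correctly flag that the $\Omega(m2^m)$ refinement would be imported from Knuth.
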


We first show the exponential blow-up cases for a specific order of elements in Section~\ref{ssec:main}.
However, we see that the size of ZDD representing the hidden weighted bit function or the permutation function is exponential regardless of the order of elements.
Therefore, in Section~\ref{ssec:order}, we prove that for each family generated by the operation in Section~\ref{ssec:main}, the ZDD size representing it remains exponential regardless of the order of elements.
This means that for each operation, there exists an instance in which the input ZDD size can be polynomial by managing the element order but the output ZDD size must be exponential for any order.
Section~\ref{ssec:polyZDD} completes the proof by showing that some families can be represented by polynomial-sized ZDDs.
Finally, Section~\ref{ssec:discussion} gives some discussions on the obtained result.

\subsection{Proofs with Specific Element Order}
\label{ssec:main}

\subsubsection{Join, Disjoint Join, Joint Join, Meet, and Delta}
\label{sssec:join}
For these operations, we constitute a pair of families that incur the union of $\order{m}$ subfamilies.
Combined with $\calE_{m,k}$, the result after taking an operation contains $\bigcup_{k}\calE_{m,k}=\calH_m$, which is the hidden weighted bit function for which the ZDD size is exponential in $m$.
\begin{theorem}
  \label{thm:join}
  Let $\diamond$ be a binary operator chosen from join {\rm ($\sqcup$)}, disjoint join {\rm ($\djoin$)}, joint join {\rm ($\jjoin$)}, meet {\rm ($\sqcap$)}, and delta {\rm ($\ddelta$)}.
  Then, there exists a sequence of families $\calF_m$ and $\calG_m$ such that {\rm (i)} $\calF_m$ and $\calG_m$ are families of subsets of a set of $\order{m}$ elements, {\rm (ii)} $Z(\calF_m)+Z(\calG_m)=\order{m^3}$, and {\rm (iii)} $Z(\calF_m\diamond\calG_m)=\morder{2^{m/5}/m}$.
\end{theorem}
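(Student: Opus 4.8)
The plan is to realize the hidden weighted bit function $\calH_m = \bigcup_{k=1}^{m}\calE_{m,k}$ as (a sub-family of, or exactly) $\calF_m \diamond \calG_m$ for a single fixed pair of polynomial-sized ZDDs, handling all five operators $\sqcup,\djoin,\jjoin,\sqcap,\ddelta$ by a common construction. The natural encoding is to introduce $m$ fresh ``selector'' elements $z_1,\dots,z_m$ in addition to $y_1,\dots,y_m$, and to let $\calF_m$ carry the selector $z_k$ together with ``half'' of the description of $\calE_{m,k}$ and $\calG_m$ the complementary half, so that combining an $\calF$-set with a $\calG$-set forces agreement on the index $k$ and assembles a set of $\calH_m$. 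Concretely, for the join case one takes (roughly)
\begin{equation}
  \calF_m = \bigcup_{k=1}^{m}\bigl(\{\{z_k\}\}\sqcup\calA_{m,k}\bigr), \qquad
  \calG_m = \bigcup_{k=1}^{m}\bigl(\{\{z_k\}\}\sqcup\calB_{m,k}\bigr),
\end{equation}
where $\calA_{m,k},\calB_{m,k}$ are families of subsets of $\{y_1,\dots,y_m\}$ chosen so that $\calA_{m,k}\sqcup\calB_{m,k}=\calE_{m,k}$ and, crucially, $\calA_{m,k}\sqcup\calB_{m,\ell}$ for $k\neq\ell$ either is empty or lands outside the $y$-block we read off (e.g. by also letting mismatched indices leave a tell-tale $z$-element so they can be stripped off). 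Then $\calF_m\sqcup\calG_m$, restricted to sets containing exactly one $z_k$ and no other $z$, is exactly $\bigcup_k(\{\{z_k\}\}\sqcup\calE_{m,k})$, from which $Z(\calF_m\sqcup\calG_m)$ inherits the exponential lower bound of $Z(\calH_m)$ up to a linear factor (a ZDD for the full result yields a small ZDD for $\calH_m$ by a few union/cofactor operations, contradicting the theorem of Bryant cited above).

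First I would fix the precise split of $\calE_{m,k}$ into $\calA_{m,k}\sqcup\calB_{m,k}$ — the simplest is $\calA_{m,k}=\{\{y_k\}\}$ (the forced element) and $\calB_{m,k}=\{T\subseteq\{y_1,\dots,y_m\}\setminus\{y_k\} : |T|=k-1\}$ — and verify (ii), i.e. that $Z(\calF_m)+Z(\calG_m)=\order{m^3}$: each $\{\{z_k\}\}\sqcup\calA_{m,k}$ and $\{\{z_k\}\}\sqcup\calB_{m,k}$ has a ZDD of size $\order{m^2}$ (cardinality-constrained subset families, cf. Section~\ref{ssec:polyZDD}), there are $m$ of them, and I must check that their union does not itself blow up — this is plausible because the $z_k$ prefixes make the $m$ sub-ZDDs share almost nothing and glue cheaply under a spine of $z$-nodes, giving $\order{m^3}$. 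Next I would verify (iii) by showing $\calF_m\diamond\calG_m$ contains (a cheaply-recoverable copy of) $\calH_m$. The \emph{main obstacle} is handling all five operators uniformly: meet ($\cap$) and delta ($\oplus$) behave very differently from join ($\cup$) on the $y$-part, and disjoint/joint join impose an extra intersection (empty / non-empty) constraint on the combined sets. For disjoint join one needs the $\calA$- and $\calB$-sides to be disjoint on $y$'s when $k$ matches (true for the split above, since $y_k\notin T$) but to clash when $k\neq\ell$ so mismatched pairs vanish — which the $z_k$'s already guarantee if we let both sides carry $z_k$ and note $z_k\in$ both forces $F\cap G\neq\emptyset$, so actually joint join is the easy one and disjoint join needs a twist (e.g. put the selector only on one side and use a second copy). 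For meet and delta I would instead start from $\calP_m=\bigcap_k\calQ_{m,k}$: build $\calF_m,\calG_m$ whose meet realizes an intersection of the $2m$ block-constraints $\calQ_{m,k}$, using complementation tricks so that $F\cap G$ ranges over $\bigcap_k \calQ_{m,k}$ as $F,G$ range over unions of $\order{m}$ simple families — this is the delicate design step and where most of the work lies.

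The remaining steps are routine: (a) confirm $\calF_m,\calG_m$ live over $\order{m}$ elements (the $y$'s plus $\order{m}$ selectors), giving (i); (b) spell out the ZDD for each operand and bound its size by $\order{m^3}$ via node-counting along the selector spine; (c) from a hypothetical ZDD $\ZDDZ$ for $\calF_m\diamond\calG_m$, extract a ZDD for $\calH_m$ (or $\calP_m$) of size $\order{|\ZDDZ|}$ by restricting to the relevant sub-cube and projecting away selectors — a composition of polynomially many union/intersection/cofactor operations, each polynomial in ZDD size — and invoke Theorem of Bryant (resp. Theorem K) to conclude $|\ZDDZ|=\morder{2^{m/5}/m}$ (resp. $\morder{2^m/m}$), which is stronger than the claimed bound. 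I expect the write-up to treat join/joint join/disjoint join together with one gadget and meet/delta together with a second gadget built on $\calP_m$, then remark that the $\calP_m$-based bound $\morder{2^m/m}$ dominates $\morder{2^{m/5}/m}$ so the uniform statement (iii) holds. The one genuinely fiddly point to get right is that the \emph{unions defining $\calF_m$ and $\calG_m$ themselves stay polynomial} — I would prove this by exhibiting the reduced ZDD explicitly rather than bounding it abstractly, since a generic union bound would only give $\order{m^2 \cdot 2^m}$-type estimates.
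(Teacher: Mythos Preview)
Your plan has a genuine gap in the join case. You propose to restrict $\calF_m\sqcup\calG_m$ to sets containing exactly one selector $z_k$ and claim this yields $\bigcup_{k}(\{\{z_k\}\}\sqcup\calE_{m,k})$, ``from which $Z(\calF_m\sqcup\calG_m)$ inherits the exponential lower bound of $Z(\calH_m)$.'' But $\bigcup_{k}(\{\{z_k\}\}\sqcup\calE_{m,k})$ has \emph{polynomial} ZDD size $\order{m^3}$: the distinct selectors $z_k$ keep the $\calE_{m,k}$'s in separate branches of the ZDD, so no merging---and hence no blow-up---occurs. Indeed, this family is exactly the paper's own \emph{input} family $\calF_m$ (with $z$'s renamed to $x$'s), whose polynomial size is the whole point of condition~(ii). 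The missing idea is that you must \emph{collapse} the selectors, not merely match them. The paper does this by taking $\calG_m=\{X\}$ (a single set): joining with $\{X\}$ absorbs every $x_k$ into $X$, so $\calF_m\sqcup\{X\}=\{X\}\sqcup\bigl(\bigcup_k\calE_{m,k}\bigr)=\{X\}\sqcup\calH_m$, and now the union of the $\calE_{m,k}$'s is actually formed. The same $\calF_m$ works for all five operators; only $\calG_m$ changes, and it is always trivial: $\{X\}$ for join and joint join, $\bigcup_k\{X\setminus\{x_k\}\}$ for disjoint join, $\{Y\}$ for meet, and $2^X$ for delta.

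Your fallback for meet and delta---switching to the permutation function $\calP_m$---has a second problem: $\calP_m$ lives over $m^2$ elements, so the base set would have $\order{m^2}$ elements, violating condition~(i). The paper avoids this entirely: for meet, $\calF_m\sqcap\{Y\}$ strips all $x$'s and leaves exactly $\calH_m$; for delta, $\{\{x_k\}\}\ddelta 2^X=2^X$ for every $k$, so $\calF_m\ddelta 2^X=2^X\sqcup\calH_m$. Both stay over $\order{m}$ elements and reuse $\calH_m$. In short, the paper's construction is asymmetric---all the content sits in $\calF_m$, and $\calG_m$ is a tiny ``eraser'' chosen per operator to fuse the selectors---whereas your symmetric split never forces the $\calE_{m,k}$'s to interact.
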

\begin{figure}
    \centering
    \includegraphics[scale=0.625]{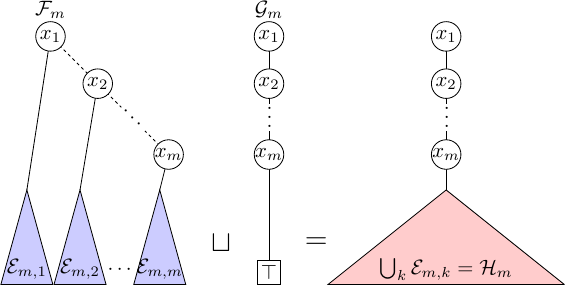}\hspace{10pt}
    \includegraphics[scale=0.625]{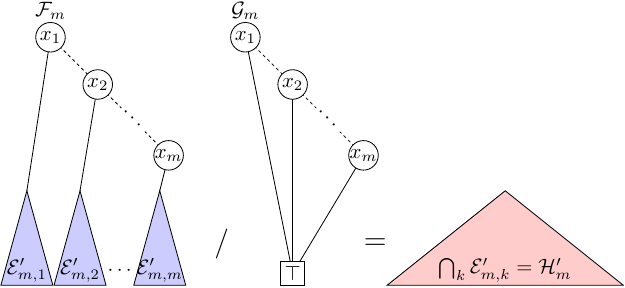}
    \caption{Example of blow-up for join (left) and quotient (right) operations. Blue triangles mean that the ZDD size representing this family is polynomial in $m$, while red triangle means that its size is exponential in $m$. Arcs going to $\bot$ terminal are omitted.}
    \label{fig:join}
\end{figure}
\begin{proof}
  Let us consider the families of subsets of $X\cup Y$, where $X\coloneqq\{x_1,\ldots,x_m\}$ and $Y\coloneqq\{y_1,\ldots,y_m\}$.
  We determine the order of elements as $x_1,\ldots,x_m,y_1,\ldots,y_m$.
  We define $\calF_m$ as
  \begin{equation*}
    \calF_m \coloneqq \bigcup_{k=1}^{m}(\{\{x_k\}\}\sqcup\calE_{m,k}).
  \end{equation*}
  Since $Z(\calE_{m,k})=\order{m^2}$ and the ZDD representing $\calF_m$ becomes the left one of Figure~\ref{fig:join} according to this order, $Z(\calF_m)=\order{m^3}$.
  
  For the join operation, we let $\calG_m\coloneqq\{X\}$, where $Z(\calG_m)=\order{m}$.
  Then,
  \begin{align*}
    \calF_m\sqcup\calG_m & \textstyle = (\bigcup_{k=1}^{m}(\{\{x_k\}\}\sqcup\calE_{m,k}))\sqcup\{X\} = \bigcup_{k=1}^{m}((\{\{x_k\}\}\sqcup\calE_{m,k})\sqcup\{X\}) \\
    & \textstyle = \bigcup_{k=1}^{m}(\{X\}\sqcup\calE_{m,k}) = \{X\}\sqcup(\bigcup_{k=1}^{m}\calE_{m,k}) = \{X\}\sqcup\calH_m,
  \end{align*}
  where the second and fourth equalities hold because join distributes over union and the third equality holds because $\{\{x_k\}\}\sqcup\{X\}=\{X\}$.
  Thus, the ZDD representing $\calF_m\sqcup\calG_m$ becomes the right one of Figure~\ref{fig:join}, meaning that the ZDD size is at least $Z(\calH_m)=\morder{2^{m/5}/m}$.
  Since every subset in $\calF_m$ has at least one element from $X$, the result of joint join $\calF_m\jjoin\calG_m$ also becomes $\{X\}\sqcup\calH_m$, leading to an exponential-sized ZDD.

  For the disjoint join operation, we let $\calG_m\coloneqq\bigcup_{k=1}^{m}\{X\setminus \{x_k\}\}$, where again $Z(\calG_m)=\order{m}$.
  Then, every subset in $\{\{x_k\}\}\sqcup\calE_{m,k}$ has intersection with all of the subsets in $\calG_m$, except for $X\setminus\{x_k\}$.
  Then,
  \begin{align*}
    \calF_m\djoin\calG_m & \textstyle = \bigcup_{k=1}^{m}((\{x_k\}\cup(X\setminus \{x_k\}))\sqcup\calE_{m,k}) = \{X\}\sqcup(\bigcup_{k=1}^{m}\calE_{m,k}) = \{X\}\sqcup\calH_m,
  \end{align*}
  meaning that $Z(\calF_m\djoin\calG_m)=\morder{2^{m/5}/m}$.

  For the meet operation, we let $\calG_m\coloneqq\{Y\}$, where $Z(\calG_m)=\order{m}$.
  Similar to join, we have $\calF_m\sqcap\calG_m=\calH_m$, meaning that $Z(\calF_m\sqcap\calG_m)=\morder{2^{m/5}/m}$.

  For the delta operation, we let $\calG_m=2^X$.
  Since $\{\{x_k\}\}\ddelta 2^X=2^X$ for any $k$, we have
  \begin{align*}
    \calF_m\ddelta\calG_m & \textstyle = \bigcup_{k=1}^{m}((\{\{x_k\}\}\ddelta 2^X)\sqcup\calE_{m,k}) = 2^X\sqcup(\bigcup_{k=1}^{m}\calE_{m,k}) = 2^X\sqcup\calH_m.
  \end{align*}
  The ZDD size of $\calF_m\ddelta\calG_m$ is at least $Z(\calH_m)=\morder{2^{m/5}/m}$.
\end{proof}

\subsubsection{Quotient and Remainder}
\label{sssec:quotient}
For the quotient operation, we constitute a pair of families such that performing an operation incurs the intersection of $\order{m}$ subfamilies.
Here, let $\calE^\comp_{m,k}\coloneqq 2^Y\setminus\calE_{m,k}$ be the complement of $\calE_{m,k}$ regarding the family of subsets of $Y$.
By De Morgan's laws, we have $\bigcap_{k}\calE^\comp_{m,k}=2^Y\setminus(\bigcup_{k}\calE_{m,k})=2^Y\setminus\calH_m\eqqcolon\calH^\comp_m$.
The ZDD size representing $\calH^\comp_m$ can be lower bounded by the following lemma.
\begin{lemma}
  \label{lem:difference}
  Suppose that two families $\calF,\calG$ of subsets of the same set satisfy $Z(\calF)=\order{f(m)}$, $Z(\calG)=\morder{g(m)}$, and $\calF\supseteq\calG$.
  Then, $Z(\calF\setminus\calG)=\morder{g(m)/f(m)}$.
\end{lemma}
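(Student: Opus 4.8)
The plan is to recover $\calG$ from $\calF$ and $\calF\setminus\calG$ by a single difference operation and then apply the elementary product bound on the size of a ZDD produced by such an operation. First I would record the set-theoretic identity that holds because $\calG\subseteq\calF$:
\[
  \calF\setminus(\calF\setminus\calG)=\calF\cap\calG=\calG .
\]
Second, I would invoke the standard fact (sketched in Section~\ref{ssec:familyZDD}, and the reason difference is marked ``Yes'' in Table~\ref{tb:operations}) that the apply-style algorithm computing the difference of two ZDDs over a common element order creates at most one node per pair of nodes of the two operands; hence the produced ZDD, and therefore its reduction, has size at most the product of the operand sizes. This bound is order-independent, which matters because the lemma is used downstream for ``any order'' statements. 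Applying it to $\calF$ and $\calF\setminus\calG$ gives
\[
  Z(\calG)=Z\bigl(\calF\setminus(\calF\setminus\calG)\bigr)=\order{Z(\calF)\cdot Z(\calF\setminus\calG)}=\order{f(m)\cdot Z(\calF\setminus\calG)} .
\]

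Finally I would combine this with the hypothesis $Z(\calG)=\morder{g(m)}$. Unwinding the asymptotics, there are constants $c,C>0$ and an index $m_0$ such that, for all $m\ge m_0$,
\[
  c\,g(m)\ \le\ Z(\calG)\ \le\ C\,f(m)\,Z(\calF\setminus\calG),
\]
so $Z(\calF\setminus\calG)\ge (c/C)\,g(m)/f(m)$, i.e. $Z(\calF\setminus\calG)=\morder{g(m)/f(m)}$, as claimed. (Here one may assume $f(m)\ge 1$, which is harmless since every ZDD has at least one node, so that the quotient $g(m)/f(m)$ is well defined.)

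The argument is short, and there is no real obstacle beyond being careful about two points: that the product bound $Z_<(\calF\setminus(\calF\setminus\calG))\le Z_<(\calF)\cdot Z_<(\calF\setminus\calG)$ indeed holds for the \emph{reduced} ZDDs under the \emph{same} order $<$ (it does, because the reduced ZDD is the unique smallest ZDD for that order and the unreduced output of the apply procedure already satisfies the bound), and that the asymptotic chaining of $\order{\cdot}$ and $\morder{\cdot}$ is done in the correct direction so that the lower bound on $Z(\calF\setminus\calG)$ survives division by $f(m)$.
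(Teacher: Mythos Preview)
Your proof is correct and follows essentially the same approach as the paper: both use the identity $\calF\setminus(\calF\setminus\calG)=\calG$ together with the product bound $Z(\calG)=\order{Z(\calF)\,Z(\calF\setminus\calG)}$ and then extract the lower bound on $Z(\calF\setminus\calG)$. The only cosmetic difference is that the paper finishes by contradiction (assuming $Z(\calF\setminus\calG)=\sorder{g(m)/f(m)}$ and deriving $Z(\calG)=\sorder{g(m)}$), whereas you unwind the constants directly; these are equivalent.
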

\begin{proof}[Proof of Lemma~\ref{lem:difference}]
  $\calF\supseteq\calG$ implies $\calF\setminus(\calF\setminus\calG)=\calG$.
  Since the ZDD size after taking the difference can be bounded by the product of the sizes of operand ZDDs, we have $Z(\calG)=\order{Z(\calF)Z(\calF\setminus\calG)}$.
  Suppose $Z(\calF\setminus\calG)=\sorder{g(m)/f(m)}$.
  Then, $Z(\calG)=\sorder{f(m)\cdot(g(m)/f(m))}=\sorder{g(m)}$, refuting the assumption $Z(\calG)=\morder{g(m)}$.
  Therefore, $Z(\calF\setminus\calG)=\morder{g(m)/f(m)}$.
\end{proof}

Since $Z(2^Y)=\order{m}$ and $Z(\calH_m)=\morder{2^{m/5}/m}$, we have $Z(\calH^\comp_m)=\morder{2^{m/5}/m^2}$.

\begin{theorem}
  \label{thm:quotient}
  Let $\diamond$ be a binary operator chosen from quotient {\rm ($\sdiv$)} and remainder {\rm ($\modulus$)}.
  Then, there exists a sequence of families $\calF_m$ and $\calG_m$ such that {\rm (i)} $\calF_m$ and $\calG_m$ are families of subsets of a set of $\order{m}$ elements, {\rm (ii)} $Z(\calF_m)+Z(\calG_m)=\order{m^3}$, and {\rm (iii)} $Z(\calF_m\diamond\calG_m)=\morder{2^{m/5}/\poly(m)}$. 
\end{theorem}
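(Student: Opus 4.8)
The plan is to realize, through the quotient, the same mechanism of intersecting $\order{m}$ polynomial-sized subfamilies that was used above to lower-bound $Z(\calH^\comp_m)$. I work over the ground set $X\cup Y$ with $X=\{x_1,\dots,x_m\}$ and $Y=\{y_1,\dots,y_m\}$, fix the order $x_1,\dots,x_m,y_1,\dots,y_m$, and put
\begin{equation*}
  \calF_m \coloneqq \bigcup_{k=1}^{m}\bigl(\{\{x_k\}\}\sqcup\calE^\comp_{m,k}\bigr),
  \qquad
  \calG_m \coloneqq \bigcup_{k=1}^{m}\{\{x_k\}\}.
\end{equation*}
Mirroring the $\calF_m$ of Theorem~\ref{thm:join}, the reduced ZDD of this $\calF_m$ is a spine of $m$ nodes labelled $x_1,\dots,x_m$ whose hi-arc at $x_k$ enters the ZDD of $\calE^\comp_{m,k}$ over $Y$ (Figure~\ref{fig:join}), and $\calG_m$ is an $m$-node chain; with the polynomial bound $Z(\calE^\comp_{m,k})=\order{m^2}$ from Section~\ref{ssec:polyZDD} this gives condition (i) and $Z(\calF_m)+Z(\calG_m)=\order{m^3}$, i.e.\ (ii).

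For (iii) with $\diamond=\sdiv$ I would evaluate $\calF_m\sdiv\calG_m$ directly from the definition. Since every set of $\calG_m$ is a singleton $\{x_k\}$, a set $S$ lies in $\calF_m\sdiv\calG_m$ iff $S\cap\{x_k\}=\emptyset$ and $S\cup\{x_k\}\in\calF_m$ for all $k$. The conditions $S\cap\{x_k\}=\emptyset$ force $S\cap X=\emptyset$, i.e.\ $S\subseteq Y$, and for such $S$, since every set in $\calF_m$ contains exactly one element of $X$, we have $S\cup\{x_k\}\in\calF_m$ iff $S\in\calE^\comp_{m,k}$. Hence
\begin{equation*}
  \calF_m\sdiv\calG_m=\bigcap_{k=1}^{m}\calE^\comp_{m,k}=\calH^\comp_m,
\end{equation*}
whose ZDD size is $\morder{2^{m/5}/m^2}=\morder{2^{m/5}/\poly(m)}$, as observed after Lemma~\ref{lem:difference}.

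For $\diamond=\modulus$ I would expand $\calF_m\modulus\calG_m=\calF_m\setminus(\calG_m\sqcup(\calF_m\sdiv\calG_m))=\calF_m\setminus(\calG_m\sqcup\calH^\comp_m)$. Since $\calG_m\sqcup\calH^\comp_m=\bigcup_{k=1}^m(\{\{x_k\}\}\sqcup\calH^\comp_m)$, and for each $k$ the $k$-th summand of $\calF_m$ and the $k$-th summand of $\calG_m\sqcup\calH^\comp_m$ both consist exactly of the sets whose unique $X$-element is $x_k$, the difference splits summand by summand; using $\calE^\comp_{m,k}\setminus\calH^\comp_m=\calE^\comp_{m,k}\cap\calH_m=\calH_m\setminus\calE_{m,k}$ this yields
\begin{equation*}
  \calF_m\modulus\calG_m=\bigcup_{k=1}^{m}\bigl(\{\{x_k\}\}\sqcup(\calH_m\setminus\calE_{m,k})\bigr).
\end{equation*}
In the reduced ZDD of this family the root is labelled $x_1$ and the subgraph hanging from its hi-arc represents $\calH_m\setminus\calE_{m,1}$, so $Z(\calF_m\modulus\calG_m)\ge Z(\calH_m\setminus\calE_{m,1})$. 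Finally $\calH_m=(\calH_m\setminus\calE_{m,1})\cup\calE_{m,1}$ with $\calE_{m,1}=\{\{y_1\}\}$, so $Z(\calE_{m,1})=\order{1}$, and the union-size bound gives $Z(\calH_m)=\order{Z(\calH_m\setminus\calE_{m,1})}$; combined with $Z(\calH_m)=\morder{2^{m/5}/m}$ this forces $Z(\calH_m\setminus\calE_{m,1})=\morder{2^{m/5}/m}=\morder{2^{m/5}/\poly(m)}$.

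The step I expect to need the most care is the cofactoring used in the remainder case: one must check that $x_1$ is genuinely the (essential) root label of the reduced ZDD of $\calF_m\modulus\calG_m$ and that following its hi-arc yields precisely $\calH_m\setminus\calE_{m,1}$, rather than some merged or further-reduced variant. This rests on the structural fact that every set of $\calF_m\modulus\calG_m$ carries exactly one element of $X$, so the summand indexed by $k=1$ accounts for all and only the sets containing $x_1$, together with the non-emptiness of $\calH_m\setminus\calE_{m,1}$ and of the other summands. The only ingredient left implicit is the bound $Z(\calE^\comp_{m,k})=\order{m^2}$ (for every order), which Section~\ref{ssec:polyZDD} establishes by the same bounded cardinality-counter argument used for $\calE_{m,k}$ and $\calQ_{m,k}$.
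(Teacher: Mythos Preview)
Your construction and the quotient argument are identical to the paper's: same $\calF_m$, same $\calG_m$, same identification $\calF_m\sdiv\calG_m=\bigcap_k\calE^\comp_{m,k}=\calH^\comp_m$.

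For the remainder you take a genuinely different route. The paper does not compute $\calF_m\modulus\calG_m$ explicitly; instead it observes that $\calG_m\sqcup(\calF_m\sdiv\calG_m)=\calG_m\sqcup\calH^\comp_m$ has ZDD size $\morder{2^{m/5}/m^2}$, notes the inclusion $\calF_m\supseteq\calG_m\sqcup(\calF_m\sdiv\calG_m)$, and then invokes Lemma~\ref{lem:difference} directly with $f(m)=m^3$ and $g(m)=2^{m/5}/m^2$ to obtain $Z(\calF_m\modulus\calG_m)=\morder{2^{m/5}/m^5}$. Your approach instead expands the remainder as $\bigcup_k(\{\{x_k\}\}\sqcup(\calH_m\setminus\calE_{m,k}))$, cofactors at $x_1$ to isolate $\calH_m\setminus\calE_{m,1}$ as a sub-ZDD, and exploits the pleasant accident that $\calE_{m,1}=\{\{y_1\}\}$ is a constant-size family, so the union bound gives $Z(\calH_m\setminus\calE_{m,1})=\morder{2^{m/5}/m}$. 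Your argument is correct (the non-emptiness of $\calH_m\setminus\calE_{m,1}$ for $m\ge 2$ indeed guarantees the root is labelled $x_1$ with the claimed hi-child), and it even yields a tighter polynomial factor than the paper. The paper's route has the advantage of being entirely black-box in Lemma~\ref{lem:difference} and not requiring any inspection of the ZDD structure, which is why it transfers more mechanically to the order-independent version in Section~\ref{ssec:order}; your route is more hands-on but sharper.
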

\begin{proof}
  We again consider the families of subsets of $X\cup Y$, where $X\coloneqq\{x_1,\ldots,x_m\}$ and $Y\coloneqq\{y_1,\ldots,y_m\}$.
  We use the same order of elements: $x_1,\ldots,x_m,y_1,\ldots,y_m$.
  We define $\calF_m$ as
  \begin{equation*}
    \calF_m \coloneqq \bigcup_{k=1}^{m}(\{\{x_k\}\}\sqcup\calE^\comp_{m,k}).
  \end{equation*}
  We have $Z(\calE^\comp_{m,k})=\order{m^2}$ as proved in Section~\ref{ssec:polyZDD}, and thus $Z(\calF_m)=\order{m^3}$.
  We also define $\calG_m\coloneqq\{\{x_1\},\ldots,\{x_m\}\}$, where $Z(\calG_m)=\order{m}$.
  
  Let us consider $\calF_m\sdiv\calG_m$.
  By definition, $Y^\prime\in\calF_m\sdiv\calG_m$ if and only if $Y^\prime\subseteq Y$ and $\{x_k\}\cup Y^\prime\in\mathcal{F}_m$ for $k=1,\ldots,m$.
  From the definition of $\calF_m$, it is equivalent to $Y^\prime\in\bigcap_{k=1}^{m}\calE^\comp_{m,k}$.
  Thus, $\calF_m\sdiv\calG_m = \bigcap_{k=1}^{m}\calE^\comp_{m,k} = \calH^\comp_m$.
  This means $Z(\calF_m\sdiv\calG_m)=\morder{2^{m/5}/m^2}$.
  The ZDDs involved are depicted in Figure~\ref{fig:join}.

  For the remainder operation, we prepared the same families.
  Since $\calG_m\sqcup(\calF_m\sdiv\calG_m)=\{\{x_1\},\ldots,\{x_m\}\}\sqcup\calH^\comp_m$, $Z(\calG_m\sqcup(\calF_m\sdiv\calG_m))=\morder{2^{m/5}/m^2}$.
  Also, since $S\in\calF_m\sdiv\calG_m$ if and only if $S\cup G\in\calF_m$ for all $G\in\calG_m$, all of the subsets in $\calG_m\sqcup(\calF_m\sdiv\calG_m)$ are also contained in $\calF_m$.
  In other words, $\calF_m\supseteq\calG_m\sqcup(\calF_m\sdiv\calG_m)$.
  Therefore, by using Lemma~\ref{lem:difference}, $Z(\calF_m\modulus\calG_m)=\morder{(2^{m/5}/m^2)/m^3}=\morder{2^{m/5}/m^5}$.
\end{proof}

\subsubsection{Restrict, Permit, Nonsuperset, and Nonsubset}
\label{sssec:restrict}
These operations include inclusion relations of subsets in their definitions, which makes it difficult to generate a hidden weighted bit function as a result of the operation.
This is due to the fact that $\calH_m$ includes the universal set $Y$ as well as a singleton $\{y_1\}$.
For example, if $\calF$ is the family of subsets of $Y$ and the universal set $Y$ is included in the result of $\calF\oslash\calG$, all of the subsets in $\calF$ must be included in $\calF\oslash\calG$ due to the definition of the permit operation.

Instead, we use the permutation function.
Because every set in $\calP_m$ has cardinality $m$, the above issue can be alleviated.
More specifically, we prepared the complement of the families:
\begin{equation*}
  \calC_m \coloneqq \{S\subseteq\{y_1,\ldots,y_{m^2}\}\mid |S|=m\},\quad
  \calT_{m,k} \coloneqq \calC_m\setminus\calQ_{m,k} (= \calC_m\cap(2^Y\setminus\calQ_{m,k})).
\end{equation*}
Here, $\calC_m$ is the family of subsets with cardinality $m$, and thus $\calT_{m,k}$ also contains only the subsets with cardinality $m$.
Moreover, by De Morgan's laws,
\begin{align*}
  \bigcup_{k=1}^{2m}\calT_{m,k} & = \calC_m\cap\left(\bigcup_{k=1}^{2m}(2^Y\setminus\calQ_{m,k})\right) = \calC_m\cap\left(2^Y\setminus\left(\bigcap_{k=1}^{2m}\calQ_{m,k}\right)\right)
  = \calC_m\setminus\calP_m.
\end{align*}
We use these families $\calT_{m,k}$ to prove the following.

\begin{theorem}
  \label{thm:restrict}
  Let $\diamond$ be a binary operator chosen from restrict {\rm ($\bigtriangleup$)}, permit {\rm ($\oslash$)}, nonsuperset {\rm ($\nonsupset$)}, and nonsubset {\rm ($\nonsubset$)}.
  Then, there exists a sequence of families $\calF_m$ and $\calG_m$ such that {\rm (i)} $\calF_m$ and $\calG_m$ are families of subsets of a set of $\order{m^2}$ elements, {\rm (ii)} $Z(\calF_m)+Z(\calG_m)=\order{m^4}$, and {\rm (iii)} $Z(\calF_m\diamond\calG_m)=\morder{2^m/\poly(m)}$. 
\end{theorem}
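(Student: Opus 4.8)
The plan is to arrange that each of the four operations outputs either $\calC_m\setminus\calP_m=\bigcup_{k=1}^{2m}\calT_{m,k}$ or $\calP_m$ itself, starting from operands with polynomial reduced ZDDs. First I would pin down the blow-up. Since $\calP_m\subseteq\calC_m$ (a permutation matrix has exactly $m$ ones), $Z(\calC_m)=\order{m^3}$ (the reduced ZDD for ``cardinality $m$ among $m^2$ elements'' is a triangular grid), and $Z_<(\calP_m)=\morder{2^m/m}$ for every order, Lemma~\ref{lem:difference} applied to $\calC_m\supseteq\calP_m$ gives $Z(\calC_m\setminus\calP_m)=\morder{(2^m/m)/m^3}=\morder{2^m/\poly(m)}$. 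Thus both $\calP_m$ and $\calC_m\setminus\calP_m$ have exponential-sized ZDDs; I fix the order $y_1<y_2<\cdots<y_{m^2}$ throughout.

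The combinatorial heart is the following, proved by a pigeonhole argument: for $F\subseteq Y\coloneqq\{y_1,\ldots,y_{m^2}\}$ with $|F|=m$, viewing $F$ as an $m\times m$ binary matrix, $F\notin\calP_m$ if and only if some row or column of $F$ contains at least two elements of $F$, and equivalently if and only if some row or column of $F$ contains none of them. (For $|F|=m$, ``some row has $\ge 2$'' and ``some row has $0$'' are equivalent, and likewise for columns; but ``bad in a row'' and ``bad in a column'' stay genuinely distinct, which is exactly why both rows and columns must be used below.) The two equivalent forms make, respectively, superset-type and subset-type witnesses available.

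Next I would take $\calF_m\coloneqq\calC_m$ in all four cases. For \textbf{restrict}, let $\calG_m$ be the family of all $2$-element subsets $\{y_a,y_b\}$ with $y_a,y_b$ in a common row or a common column (the edge set of the rook's graph); then some $G\in\calG_m$ has $G\subseteq F$ iff $F$ has two elements in some row or column, so $\calF_m\bigtriangleup\calG_m=\calC_m\setminus\calP_m$. For \textbf{permit}, let $\calG_m\coloneqq\{\,Y\setminus R_i\mid 1\le i\le m\,\}\cup\{\,Y\setminus C_j\mid 1\le j\le m\,\}$ with $R_i$ the $i$th row and $C_j$ the $j$th column; then some $G\in\calG_m$ has $F\subseteq G$ iff $F$ misses an entire row or column, so $\calF_m\oslash\calG_m=\calC_m\setminus\calP_m$. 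Since $\calF\nonsupset\calG=\calF\setminus(\calF\bigtriangleup\calG)$ and $\calF\nonsubset\calG=\calF\setminus(\calF\oslash\calG)$, the very same operand pairs give $\calC_m\nonsupset\calG_m=\calC_m\nonsubset\calG_m=\calC_m\cap\calP_m=\calP_m$. In every case the output is $\calC_m\setminus\calP_m$ or $\calP_m$, hence $\morder{2^m/\poly(m)}$ as required by~(iii), and every operand is a family of subsets of the $m^2$-element set $Y$, giving~(i).

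What remains --- and where I expect the real work to lie, the set identities above being routine --- is~(ii): polynomial bounds on $Z(\calG_m)$ (and on $Z(\calC_m)$) under the chosen order. For $\calC_m$ the $\order{m^3}$ triangular-grid bound can be quoted from Section~\ref{ssec:polyZDD}. For the two witness families I would exhibit, for the order $y_1<\cdots<y_{m^2}$, explicit (not-necessarily-reduced) ZDDs of polynomial size, which bound the reduced sizes from above: for the rook's-graph family, a node at level $y_\ell$ needs to remember only ``nothing chosen yet'' / ``two already chosen'' / the grid position of the one element chosen so far (which determines the legal second choices), so $\order{m^2}$ states per level and $\order{m^4}$ nodes suffice; for the row/column-complement family, a node needs to track only which row or column (if any) is currently being deleted --- and before the first deleted element this is forced by the level --- again $\order{m^2}$ states per level and $\order{m^4}$ nodes. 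Hence $Z(\calF_m)+Z(\calG_m)=\order{m^4}$, giving~(ii). I would finish by noting that all of these estimates, and the lower bound $Z_<(\calP_m)=\morder{2^m/m}$ used through Lemma~\ref{lem:difference}, are order-independent, which is exactly what the strengthening in Section~\ref{ssec:order} exploits.
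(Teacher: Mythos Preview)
Your argument is correct, and it reaches the same conclusion as the paper (the outputs are $\calC_m\setminus\calP_m$ or $\calP_m$, with the lower bound coming from Lemma~\ref{lem:difference}), but the constructions of the operand families differ in an interesting way.

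The paper works over $X\cup Y$ with auxiliary tags $X=\{x_1,\ldots,x_{2m}\}$: it takes a single $\calG_m=\bigcup_{k=1}^{2m}(\{\{x_k\}\}\sqcup\calT_{m,k})$ for all four operations and adjusts $\calF_m$ (to $\calC_m$ for permit/nonsubset and to $\{X\}\sqcup\calC_m$ for restrict/nonsuperset). The tag $x_k$ forces any $F$ that is comparable to a set of $\calG_m$ to be comparable via exactly one $\calT_{m,k}$, which makes the identity $\calF_m\oslash\calG_m=\bigcup_k\calT_{m,k}=\calC_m\setminus\calP_m$ immediate. The payoff is reusability: this very $\calG_m$ is recycled verbatim in the proof of Theorem~\ref{thm:maximal}.

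You instead stay inside $Y$ and pick tag-free, operation-specific witnesses: rook's-graph edges for $\bigtriangleup$/$\nonsupset$ and line complements $Y\setminus R_i$, $Y\setminus C_j$ for $\oslash$/$\nonsubset$. This is more elementary and avoids the detour through $\{X\}\sqcup\calC_m$. Your size bounds on $\calG_m$ are fine; in fact both witness families admit the sharper $\order{m^3}$ bound via the ``sum of cardinalities'' estimate the paper invokes in Theorem~\ref{thm:hitting}: the rook's graph has $m^2(m-1)$ edges of size~$2$, and the line-complement family has $2m$ sets of size $m^2-m$. This also makes the order-independence of the input bounds transparent without any state-counting. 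The trade-off is that your $\calG_m$ for restrict is not the same as for permit, so a separate (though equally short) construction would be needed if one wanted to plug it into the maximal/minimal argument.
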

\begin{figure}
    \centering
    \includegraphics[scale=0.625]{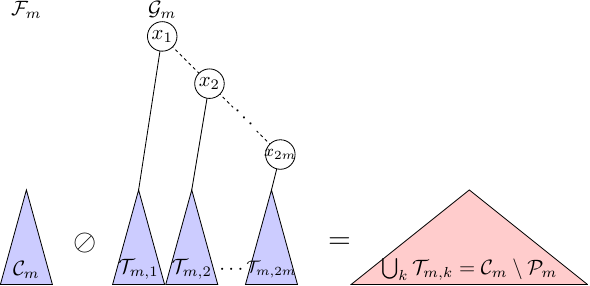}
    \includegraphics[scale=0.625]{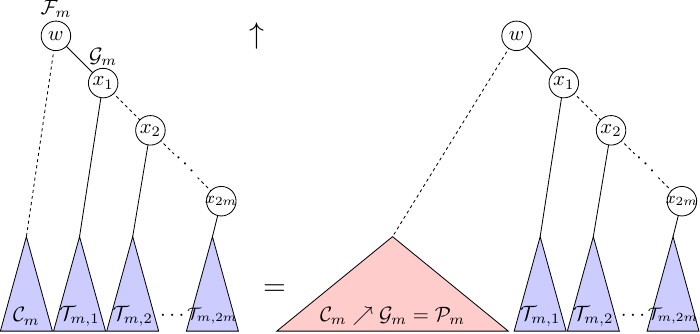}
    \caption{Example of blow-up for permit (left) and maximal (right) operations.}
    \label{fig:permit}
\end{figure}
\begin{proof}
Let us consider the families of subsets of $X\cup Y$, where $X\coloneqq\{x_1,\ldots,x_{2m}\}$ and $Y\coloneqq\{y_1,\ldots,y_{m^2}\}$.
The order of elements is  $x_1,\ldots,x_{2m}$ followed by $y_1,\ldots,y_{m^2}$.

We first consider the permit operation.
We define $\calF_m\coloneqq\calC_m$ and
\begin{equation*}
  \calG_m\coloneqq\bigcup_{k=1}^{2m}(\{\{x_k\}\}\sqcup\calT_{m,k}).
\end{equation*}
As proved in Section~\ref{ssec:polyZDD}, $Z(\calC_m)=\order{m^3}$ and $Z(\calT_{m,k})=\order{m^3}$.
Thus, $Z(\calF_m)=\order{m^3}$ and $Z(\calG_m)=\order{m^4}$.
Any set in $\calF_m=\calC_m$ consists of $m$ elements chosen from $y_1,\ldots,y_{m^2}$, and any set in $\calG_m$ consists of $m$ elements from $y_1,\ldots,y_{m^2}$ plus one element from $x_1,\ldots,x_{2m}$.
Thus, set $S\in\calF_m$ is a subset of some set in $\calG_m$ if and only if $\{x_k\}\cup S\in\calG_m$ for some $k$.
In other words, $S\in\calF_m\oslash\calG_m$ if and only if $S$ is included in $\calT_{m,k}$ for some $k$.
Since $\calC_m\supset\calT_{m,k}$ for any $k$ by definition, this means $\calF_m\oslash\calG_m=\bigcup_{k=1}^{2m}\calT_{m,k}=\calC_m\setminus\calP_m$.
Since $Z(\calC_m)=O(m^3)$ and $Z(\calP_m)=\morder{2^m/m}$, we have $Z(\calF_m\oslash\calG_m)=\morder{2^m/m^4}$ by Lemma~\ref{lem:difference}.
The ZDDs involved are depicted in Figure~\ref{fig:permit}.

The nonsubset operation can be treated with the same families.
Since $\calF_m\nonsubset\calG_m=\calF_m\setminus(\calF_m\oslash\calG_m)$ by definition, we have $\calF_m\nonsubset\calG_m=\calC_m\setminus(\calC_m\setminus\calP_m)=\calP_m$, where the last equality holds due to $\calC_m\supset\calP_m$.
Thus, $Z(\calF_m\nonsubset\calG_m)=\morder{2^m/m}$.

The restrict and nonsuperset operations can be handled by nearly the same families.
We define the same $\calG_m$ and let $\calF_m\coloneqq\{X\}\sqcup\calC_m$.
Similar to the proof of the permit operation, set $X\cup S\in\calF_m$ ($S\subseteq Y$) is a superset of some sets in $\calG_m$ if and only if $\{x_k\}\cup S\in\calG_m$ for some $k$.
This means $\calF_m\bigtriangleup\calG_m=\{X\}\sqcup(\bigcup_{k=1}^{2m}\calT_{m,k})=\{X\}\sqcup(\calC_m\setminus\calP_m)$, whose ZDD size is $\morder{2^m/m^4}$.
For the nonsuperset operation, we have $\calF_m\nonsupset\calG_m=\calF_m\setminus(\calF_m\bigtriangleup\calG_m)=\{X\}\sqcup\calP_m$, yielding $Z(\calF_m\nonsupset\calG_m)=\morder{2^m/m}$.
\end{proof}

\subsubsection{Maximal and Minimal}
\label{sssec:maximal}
For these operations, we use the close relationship with the nonsuperset and nonsubset operations.
We prepare a family having $\calF_m$ and $\calG_m$ appearing in the proof of Theorem~\ref{thm:restrict} as a subfamily.
\begin{theorem}
  \label{thm:maximal}
  Let ${}^\diamond$ be a unary operator chosen from maximal {\rm (${}^\uparrow$)} and minimal {\rm (${}^\downarrow$)}.
  Then, there exists a sequence of families $\calF_m$ such that {\rm (i)} $\calF_m$ is a family of subsets of a set of $\order{m^2}$ elements, {\rm (ii)} $Z(\calF_m)=\order{m^4}$, and {\rm (iii)} $Z(\calF_m^\diamond)=\morder{2^m/\poly(m)}$.
\end{theorem}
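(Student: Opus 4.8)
I would reuse the families from the proof of Theorem~\ref{thm:restrict} and exploit the tight link between the maximal (resp.\ minimal) operation and the nonsuperset (resp.\ nonsubset) operation. The guiding idea is: if a family $\calD$ consists of an antichain together with a handful of extra sets that strictly dominate exactly the antichain members one wants removed, then $\calD^\uparrow$ (resp.\ $\calD^\downarrow$) retains precisely the undominated antichain members plus those extra sets; choosing the two parts as in Theorem~\ref{thm:restrict} leaves a (possibly tagged) copy of the permutation function $\calP_m$, whose reduced ZDD is exponential. I work over $X\cup Y$ with $X=\{x_1,\dots,x_{2m}\}$ and $Y=\{y_1,\dots,y_{m^2}\}$ under the order $x_1,\dots,x_{2m},y_1,\dots,y_{m^2}$, and I keep $\calC_m$, $\calT_{m,k}=\calC_m\setminus\calQ_{m,k}$, and $\calG_m\coloneqq\bigcup_{k=1}^{2m}(\{\{x_k\}\}\sqcup\calT_{m,k})$ from that proof; recall $Z(\calC_m)=\order{m^3}$, $Z(\calT_{m,k})=\order{m^3}$ (Section~\ref{ssec:polyZDD}), $\bigcup_{k=1}^{2m}\calT_{m,k}=\calC_m\setminus\calP_m$, and $\calP_m\subseteq\calQ_{m,k}$ for every $k$.

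For the maximal operation I would take $\calF_m\coloneqq\calC_m\cup\calG_m$, i.e.\ the union of the two families ``$\calF_m$'' and ``$\calG_m$'' of the permit/nonsubset case of Theorem~\ref{thm:restrict}. Its reduced ZDD is a spine $x_1,x_2,\dots,x_{2m}$ along the lo-arcs, carrying the ZDD of $\calT_{m,k}$ on the hi-arc out of $x_k$ and the ZDD of $\calC_m$ below $x_{2m}$; hence $Z(\calF_m)\le 2m+\sum_{k=1}^{2m}Z(\calT_{m,k})+Z(\calC_m)=\order{m^4}$ (cf.\ the right-hand diagram of Figure~\ref{fig:permit}). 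The crucial claim is $\calF_m^\uparrow=\calP_m\cup\calG_m$, which I would establish by a case split on the kinds of members: (i) each $\{x_k\}\cup T$ with $T\in\calT_{m,k}$ is maximal, because a member of $\calF_m$ strictly containing it would have to be some $\{x_k\}\cup T'$ with $T'\supsetneq T$ (sets of $\calC_m$ have only $m$ elements, and $\{x_j\}\cup T'$ with $j\ne k$ omits $x_k$), which is impossible since $|T'|=|T|=m$; (ii) a set $S\in\calC_m$ sits strictly inside some member of $\calF_m$ iff $\{x_k\}\cup S\in\calG_m$ for some $k$, i.e.\ iff $S\in\bigcup_k\calT_{m,k}=\calC_m\setminus\calP_m$, so exactly the permutations survive among the $\calC_m$-part. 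Since $\calG_m\cap\calC_m=\emptyset$ and $\calP_m\subseteq\calC_m$, this gives $\calF_m^\uparrow\cap\calC_m=\calP_m$; as the ZDD of an intersection has size at most the product of the operand sizes, $Z(\calP_m)=\order{Z(\calF_m^\uparrow)\cdot Z(\calC_m)}$, and with $Z_<(\calP_m)=\morder{2^m/m}$ we conclude $Z(\calF_m^\uparrow)=\morder{2^m/m^4}$.

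For the minimal operation I would instead take $\calF_m\coloneqq(\{X\}\sqcup\calC_m)\cup\calG_m$, the union of ``$\calF_m$'' and ``$\calG_m$'' of the restrict/nonsuperset case of Theorem~\ref{thm:restrict}; an analogous reading of its reduced ZDD again gives $Z(\calF_m)=\order{m^4}$. Dually, each $\{x_k\}\cup T\in\calG_m$ is minimal (its cardinality is $m{+}1$ while every other member of $\calF_m$ has cardinality $m{+}1$ or $3m$), and $X\cup S$ with $S\in\calC_m$ has a proper subset in $\calF_m$ iff $\{x_k\}\cup S\in\calG_m$ for some $k$, i.e.\ iff $S\in\calC_m\setminus\calP_m$. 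Hence $\calF_m^\downarrow=(\{X\}\sqcup\calP_m)\cup\calG_m$ and $\calF_m^\downarrow\cap(\{X\}\sqcup\calC_m)=\{X\}\sqcup\calP_m$. Since the reduced ZDD of $\{X\}\sqcup\calP_m$ contains the reduced ZDD of $\calP_m$ as a sub-DAG, $Z(\{X\}\sqcup\calP_m)\ge Z(\calP_m)$, and since $Z(\{X\}\sqcup\calC_m)=\order{m^3}$, the same product bound gives $Z(\calF_m^\downarrow)=\morder{2^m/m^4}$.

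The main obstacle is pinning down the identities $\calF_m^\uparrow=\calP_m\cup\calG_m$ and $\calF_m^\downarrow=(\{X\}\sqcup\calP_m)\cup\calG_m$ exactly: one must verify both that the dominator family $\calG_m$ loses none of its members (each $x_k$-tagged set really is maximal, resp.\ minimal) and that it dominates no permutation --- an $x_k$-tagged set is comparable by strict inclusion to a set $S\in\calC_m$ only when $S=T\in\calT_{m,k}$, which is impossible for $S\in\calP_m$ since $\calP_m\subseteq\calQ_{m,k}$. Once this ``antichain plus dominators'' description is confirmed, the cardinality-based disjointness, the product bound on intersection size, and the bound $Z_<(\calP_m)=\morder{2^m/m}$ (valid for every order) finish the argument, in the same spirit as Lemma~\ref{lem:difference}; the upper bound $Z(\calF_m)=\order{m^4}$ is just a direct reading of the reduced diagram.
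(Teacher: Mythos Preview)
Your argument is correct and essentially identical to the paper's: you union the antichain $\calC_m$ (resp.\ $\{X\}\sqcup\calC_m$) with the dominator family $\calG_m$ so that taking maximal (resp.\ minimal) filters the antichain part down to $\calP_m$ while leaving $\calG_m$ intact. The only difference is that the paper tags one of the two parts with a fresh element $w$, which lets both the $\order{m^4}$ input bound and the $Z(\calP_m)$ output lower bound be read off directly from the root split of the reduced ZDD, whereas you obtain the same bounds via a spine analysis for the input and the intersection--product bound $Z(\calP_m)=\order{Z(\calF_m^{\diamond})\,Z(\calC_m)}$ for the output---a purely cosmetic variation.
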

\begin{proof}
  Let us consider the family of subsets of $\{w\}\cup X\cup Y$, where $X\coloneqq\{x_1,\ldots,x_{2m}\}$ and $Y\coloneqq\{y_1,\ldots,y_{m^2}\}$.
  The order of elements is $w,x_1,\ldots,x_{2m}$ followed by $y_1,\ldots,y_{m^2}$.

  We first consider the maximal operation.
  We define $\calF_m$ as
  \begin{equation*}
      \calF_m\coloneqq\calC_m\cup\left[\{\{w\}\}\sqcup\calG_m\right],\quad \text{where}\ \calG_m \coloneqq \bigcup_{k=1}^{2m}(\{\{x_k\}\}\sqcup\calT_{m,k}).
  \end{equation*}
  Here, we observe that this $\calG_m$ is the same as that appearing in the proof of Theorem~\ref{thm:restrict}.
  The ZDD size is bounded as $Z(\calF_m)=\order{Z(\calC_m)+Z(\calG_m)}=\order{m^4}$.
  Every set in $\calC_m$ has $m$ elements and every set in $\{\{w\}\}\sqcup\calG_m$ has $m+2$ elements.
  Thus, every set in the latter family is maximal, while a set in the former family is maximal if and only if it is not a subset of any set included in the latter family.
  Therefore, we have
  \begin{align*}
      \calF_m^\uparrow & = \left[\calC_m\nonsubset(\{\!\{\!w\!\}\!\}\sqcup\calG_m)\right]\!\cup\!\left[\{\!\{\!w\!\}\!\}\sqcup\calG_m\right] = \left[\calC_m\nonsubset\calG_m\right]\!\cup\!\left[\{\!\{\!w\!\}\!\}\sqcup\calG_m\right] = \calP_m\cup\left[\{\!\{\!w\!\}\!\}\sqcup\calG_m\right],
  \end{align*}
  where the second equality holds because all of the sets in $\calC_m$ do not include $w$ and the last equality follows from the proof of Theorem~\ref{thm:restrict}.
  The resultant ZDD is like the right one in Figure~\ref{fig:permit}, which implies $Z(\calF_m^\uparrow)\geq Z(\calP_m)=\morder{2^m/m}$.

  The minimal can be treated in a similar way.
  We define
  \begin{equation*}
    \calF_m\coloneqq\calG_m\cup\left[\{\{w\}\}\sqcup\{\{x_1,\ldots,x_{2m}\}\}\sqcup\calC_m\right],
  \end{equation*}
  where $\calG_m$ is the same family as that above.
  We again have $Z(\calF_m)=\order{m^4}$.
  Every set in $\calG_m$ has $m+1$ elements and every set in $\{\{w\}\}\sqcup\{X\}\sqcup\calC_m$ has $3m+1$ elements.
  Thus, every set in the former family is minimal, while a set in the latter family is minimal if and only if it is not a superset of any set included in the former family.
  Now we have
  \begin{align*}
      \calF_m^\downarrow & = \calG_m\cup\left[(\{\{w\}\}\sqcup\{X\}\sqcup\calC_m)\nonsupset\calG_m\right] \\
      & = \calG_m\cup\left[\{\{w\}\}\sqcup ((\{X\}\sqcup\calC_m)\nonsupset\calG_m)\right]
      = \calG_m\cup\left[\{\{w\}\}\sqcup\{X\}\sqcup\calP_m\right],
  \end{align*}
  where the second equality holds because none of the sets in $\calG_m$ includes $w$ and the last equality follows from the proof of Theorem~\ref{thm:restrict}.
  This again implies $Z(\calF_m^\downarrow)\geq Z(\calP_m)=\morder{2^m/m}$.
\end{proof}

\subsubsection{Minimal Hitting Set and Closure}
\label{sssec:hitting}
For these operations, we can constitute much simpler examples.
\begin{theorem}
  \label{thm:hitting}
  Let ${}^\diamond$ be a unary operator chosen from minimal hitting set {\rm (${}^\sharp$)} and closure {\rm (${}^\cap$)}.
  Then, there exists a sequence of families $\calF_m$ such that {\rm (i)} $\calF_m$ is a family of subsets of a set of $\order{m^2}$ elements, {\rm (ii)} $Z(\calF_m)=\order{m^4}$, and {\rm (iii)} $Z(\calF_m^\diamond)=\morder{2^m/\poly(m)}$.
\end{theorem}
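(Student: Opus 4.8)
The plan is to handle $\sharp$ and $\cap$ by two small grid-based constructions, reducing the exponential lower bound to the permutation function $\calP_m$ in one case and to an explicit cofactor count in the other. Identify $Y=\{y_1,\ldots,y_{m^2}\}$ with the cells of an $m\times m$ grid via $y_{m(i-1)+j}\leftrightarrow(i,j)$, write $[m]=\{1,\ldots,m\}$, and let $R_k=\{y_{m(k-1)+1},\ldots,y_{m(k-1)+m}\}$ and $C_k=\{y_k,y_{m+k},\ldots,y_{m(m-1)+k}\}$ be the $k$-th row and column; then $\calP_m=\{S\subseteq Y:|S\cap R_i|=1$ and $|S\cap C_j|=1$ for all $i,j\}$. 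In both constructions the input $\calF_m$ is a family of only $2m$ sets over $m^2$ elements, and since a family of $t$ sets over $n$ elements admits a ZDD with $\order{tn}$ nodes for every order, $Z(\calF_m)=\order{m^3}$; this gives conditions (i) and (ii).

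For the minimal hitting set I would take $\calF_m\coloneqq\{R_1,\ldots,R_m,C_1,\ldots,C_m\}$, so that $\calF_m^\sharp$ is the family of inclusion-minimal subsets of $Y$ meeting every row and every column. The key claim is $\calF_m^\sharp\cap\calC_m=\calP_m$, i.e., $\calP_m$ is exactly the set of minimal hitting sets of cardinality $m$: every permutation matrix has $m$ cells, meets each line once, and is minimal (deleting $(i,\pi(i))$ unmeets row $i$); conversely a minimal hitting set $S$ with $|S|=m$ meets the $m$ pairwise-disjoint rows, so $|S\cap R_i|=1$ for all $i$, and then $\sum_j|S\cap C_j|=|S|=m$ with each term $\ge1$ forces $|S\cap C_j|=1$ for all $j$, i.e., $S\in\calP_m$. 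Since $Z(\calA\cap\calB)=\order{Z(\calA)Z(\calB)}$, this yields $Z(\calP_m)=\order{Z(\calF_m^\sharp)\,Z(\calC_m)}$, and with $Z(\calC_m)=\order{m^3}$ and $Z(\calP_m)=\morder{2^m/m}$ we obtain $Z(\calF_m^\sharp)=\morder{2^m/m^4}$. This inequality holds for every element order, which simultaneously settles the stronger claim needed in Section~\ref{ssec:order} for this operation.

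For the closure I would instead use the complemented lines $\calF_m\coloneqq\{Y\setminus R_1,\ldots,Y\setminus R_m,Y\setminus C_1,\ldots,Y\setminus C_m\}$. Intersecting the complements of the rows in $I\subseteq[m]$ and the columns in $J\subseteq[m]$ gives $Y\setminus(\bigcup_{i\in I}R_i\cup\bigcup_{j\in J}C_j)$, which is exactly the combinatorial rectangle $([m]\setminus I)\times([m]\setminus J)$ (the cell set $\{(i,j):i\notin I,\ j\notin J\}$); hence $\calF_m^\cap$ contains $A\times B$ for every $A,B\subseteq[m]$. Fixing the row-major order of $Y$, I would bound the ZDD size by counting cofactors: for each nonempty $B\subseteq[m]$, fixing the first row of a set to be exactly $B$ leaves the cofactor family consisting of all rectangles supported on rows $2,\ldots,m$ with column set $B$, and these $2^m-1$ cofactor families are pairwise distinct and realized by internal nodes of the reduced ZDD; hence $Z(\calF_m^\cap)\ge2^m-1=\morder{2^m}$.

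The polynomial bounds on the two inputs and the cofactor bookkeeping are routine. The step that needs care is the structural identity $\calF_m^\sharp\cap\calC_m=\calP_m$: the lines hypergraph does have minimal transversals of cardinality larger than $m$ (disjoint unions of stars), so intersecting with $\calC_m$ to isolate the permutation matrices is essential, and one must verify that no such larger minimal transversal has exactly $m$ cells. For the closure, the analogue is confirming that all rectangles really arise in $\calF_m^\cap$ and that the first-row cofactors are genuinely distinct. Finally, upgrading (iii) to all element orders (Section~\ref{ssec:order}) is immediate for $\sharp$ from the order-independent inequality above, whereas for $\cap$ it additionally requires the fact that the family of all combinatorial rectangles has ZDD size $\morder{2^m/\poly(m)}$ under any order.
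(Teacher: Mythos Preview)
Your arguments are correct for both operators, and in one place you are \emph{more} correct than the paper.

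\textbf{Minimal hitting set.} Your construction $\calF_m=\{R_1,\ldots,R_m,C_1,\ldots,C_m\}$ is exactly the paper's, but the paper then asserts $\calF_m^\sharp=\calP_m$ outright. That identity is false: for $m\ge3$ the set $\{(1,2),(1,3),(2,1),(3,1)\}$ (and generally the ``headless cross'' $\{(1,j):j\neq1\}\cup\{(i,1):i\neq1\}$) is a minimal transversal of the $2m$ lines with $2m-2>m$ cells, so $\calF_m^\sharp\supsetneq\calP_m$. Your weaker statement $\calF_m^\sharp\cap\calC_m=\calP_m$ is the right one, and your derivation $Z(\calP_m)=\order{Z(\calF_m^\sharp)Z(\calC_m)}$ gives the desired $\morder{2^m/m^4}$ bound. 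So here you have actually repaired a gap in the paper's proof, and your argument is already order-independent.

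\textbf{Closure.} Your route is genuinely different. The paper uses $m^2$ carefully designed sets over $X\cup Y$ so that $\calF_m^\cap\cap\calC_m=\calP_m$, recycling the permutation-function lower bound. You instead take the $2m$ complemented lines, identify $\calF_m^\cap$ with the family of all combinatorial rectangles $A\times B$, and count $2^m-1$ distinct first-row cofactors directly. Your construction is smaller and more transparent, and the cofactor count is valid for the row-major order, so the theorem as stated is proved. The trade-off is exactly the one you flag: the paper's construction feeds straight into Section~\ref{ssec:order} because $Z_<(\calP_m)=\morder{2^m/m}$ is known for every $<$, whereas your construction needs the separate (true, but not proved here) fact that the rectangle family has exponential ZDD size under every order. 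If you want a self-contained all-orders statement for closure, either supply that lemma or switch to the paper's construction for that part.
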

\begin{proof}
  Let $X\coloneqq\{x_1,\ldots,x_{2m}\}$ and $Y\coloneqq\{y_1,\ldots,y_{m^2}\}$.
  For $k=1,\ldots,m$, we set $S_k\coloneqq\{y_{m(k\!-\!1)\!+\!1},y_{m(k\!-\!1)\!+\!2},\ldots,y_{m(k\!-\!1)\!+\!m}\}$, and $S_{m+k}\coloneqq\{y_k,y_{m\!+\!k},\ldots,y_{m(m\!-\!1)\!+\!k}\}$.

  For minimal hitting set operation, we consider a family of subsets of $Y$.
  We define $\calF_m\coloneqq\{S_1,\ldots,S_{2m}\}$.
  Since the ZDD size can be bounded by the sum of cardinality of a set in the family~\cite{Minato2008-xa}, $Z(\calF_m)\leq\sum_i|S_i|=\order{m^2}$.
  For $S\subseteq\{y_1,\ldots,y_{m^2}\}$, we associate a binary $m\times m$ matrix, where the $(i,j)$-element is $1$ if and only if $y_{m(i-1)+j}\in S$.
  Then, $S\cap S_k\neq\emptyset$ means that the $k$-th row of the matrix has at least one $1$ and $S\cap S_{m+k}\neq\emptyset$ means that the $k$-th column of the matrix has at least one $1$.
  Thus, $S\in\calF_m^\sharp$ if and only if the corresponding matrix has at least one $1$ for any column or row and no proper subset of $S$ satisfies this property.
  The minimal matrix having this property is the permutation matrix, and thus $\calF_m^\sharp=\calP_m$, that is, the permutation function.
  This implies $Z(\calF_m^\sharp)=\morder{2^m/m}$.

  For closure operation, we consider a family of subsets of $X\cup Y$.
  For $k=1,\ldots,m$ and $\ell=1,\ldots,m$, we define $R_{k,\ell}\coloneqq (X\setminus\{x_k,x_{m\!+\!\ell}\})\cup((Y\setminus S_k\setminus S_{m\!+\!\ell})\cup\{y_{m(k\!-\!1)\!+\!\ell}\})$.
  We define $\calF_m\coloneqq\{R_{k,\ell}\mid k,\ell=1,\ldots,m\}$.
  Again, since the ZDD size can be bounded by the sum of cardinality of a set in the family~\cite{Minato2008-xa}, $Z(\calF_m)\leq\sum_{k,\ell}|R_{k,\ell}|=\order{m^4}$.
  Then, we show that $\calF_m^\cap\cap\calC_m=\calP_m$, where $\calP_m$ is the permutation function.
  If it is shown, $Z(\calF_m^\cap\cap\calC_m)=Z(\calP_m)=\morder{2^m/m}$.
  On the other hand, $Z(\calF_m^\cap\cap\calC_m)=\order{Z(\calF_m^\cap)Z(\calC_m)}$.
  Since $Z(\calC_m)=\order{m^3}$, we can deduce that $Z(\calF_m^\cap)=\morder{2^m/\poly(m)}$.

  We now prove $\calF_m^\cap\cap\calC_m=\calP_m$.
  First, we show that $\calF_m^\cap\cap\calC_m\subseteq\calP_m$.
  $R_{k,\ell}$ does not contain any element in $S_k$ and $S_{m\!+\!\ell}$ except for $y_{m(k\!-\!1)\!+\!\ell}$.
  By fixing $k$, if $\calF^\prime\subseteq\calF$ contains at least one $R_{k,\ell}$ for some $\ell$, $S=\bigcap_{S^\prime\in\calF^\prime}S^\prime$ contains at most one element from $S_k$.
  Moreover, $S$ does not contain $x_k$ if and only if $\calF^\prime$ contains at least one $R_{k,\ell}$ for some $\ell$.
  Similarly, by fixing $\ell$, if $\calF^\prime$ contains at least one $R_{k,\ell}$ for some $k$, which is equivalent to that $S$ does not contain $x_{m\!+\!\ell}$, $S$ contains at most one element from $S_{m\!+\!\ell}$.
  Now we can say that when $S$ contains no element in $X$, $S$ contains at most one element in $S_k$ for any $k=1,\ldots,2m$.
  This means that if $S$ contains no element in $X$ and $m$ elements in $Y$, $S\in\calP_m$.
  Thus, $\calF_m^\cap\cap\calC_m\subseteq\calP_m$.
  Next, we show that $\calF_m^\cap\cap\calC_m\supseteq\calP_m$.
  Let $\sigma$ be an arbitrary permutation of $1,\ldots,m$.
  Then, $\{y_{\sigma(1)},y_{m\!+\!\sigma(2)},\ldots,y_{(m-1)m\!+\!\sigma(m)}\}=R_{1,\sigma(1)}\cap R_{2,\sigma(2)}\cap\cdots\cap R_{m,\sigma(m)}$.
  This means that any set in $\calP_m$ is in $\calF_m^\cap$.
  Thus, $\calF_m^\cap\cap\calC_m\supseteq\calP_m$.
  This concludes $\calF_m^\cap\cap\calC_m=\calP_m$.
\end{proof}

\subsection{Consideration for Element Order}
\label{ssec:order}
The above proofs fix the order of elements for each operation.
Thus, there is still a possibility that the resultant ZDD size becomes smaller by managing the order of elements.
However, it seems that the size of resultant ZDD remains exponential regardless of the order of elements, since every resultant family contains a hidden weighted bit function, a permutation function, or similar families as a subfamily.
In the following, we prove that every resultant family has an exponential ZDD size regardless of the order of elements.
\begin{definition}
  Let $\calF$ be a family of subsets of set $X$, and let $Y,Y^\prime$ be the subsets of $X$ satisfying $Y\cap Y^\prime=\emptyset$.
  We define $\calF\vert_{Y,Y^\prime}$ as the family of subsets of $X\setminus(Y\cup Y^\prime)$ such that $S\in\calF\vert_{Y,Y^\prime}$ if and only if $S\cup Y\in\calF$.
\end{definition}

In other words, $\calF\vert_{Y,Y^\prime}$ is the family of sets generated from $\calF$ by first extracting the sets containing every element of $Y$, but no element of $Y^\prime$, and then eliminating all of the elements of $Y$ from every set.
This operation is called \emph{conditioning} and it is a famous result that this can be performed in polynomial time with BDDs~\cite{darwiche02}.
For the sake of completeness, we show this can also be performed in polynomial time with ZDDs, and then we prove the following.
\begin{lemma}
  \label{lem:order}
  Let $\calF$ be a family of subsets of a set $X$ of $\order{f(m)}$ elements.
  If there exist $Y,Y^\prime\subseteq X$ such that $Z_<(\calF\vert_{Y,Y^\prime})=\morder{g(m)}$ for any order $<$ of elements, we have $Z_<(\calF)=\morder{g(m)/f(m)}$ for any order $<$ of elements.
\end{lemma}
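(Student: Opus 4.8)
The plan is to reduce the statement to the (folklore but easily verified) fact that \emph{conditioning} can only decrease the size of a reduced ZDD, mirroring the classical BDD argument of Darwiche and Marquis~\cite{darwiche02} that the paper has already invoked. First I would treat the single-element case: for any $x\in X$, the operations $\calF\mapsto\calF\vert_{\{x\},\emptyset}$ and $\calF\mapsto\calF\vert_{\emptyset,\{x\}}$ are exactly the two ZDD cofactor operations, each computable by a single top-down pass, and the claim is $Z_{<'}(\calF\vert_{\{x\},\emptyset})\le Z_<(\calF)$ and $Z_{<'}(\calF\vert_{\emptyset,\{x\}})\le Z_<(\calF)$, where $<'$ denotes $<$ restricted to $X\setminus\{x\}$.

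To establish the single-element bound I would spell out the recursion on the reduced ZDD of $\calF$. For conditioning $x$ to \emph{true}: a root with label $y<x$ keeps label $y$ and recurses into both children; a root with label $x$ is replaced by the sub-ZDD rooted at its hi-child verbatim (that sub-ZDD already lives over elements $>x$); a root with label $>x$ or equal to $\top$ becomes $\bot$, since no set it represents contains $x$; and $\bot$ stays $\bot$. One checks that this produces a legal ZDD over $X\setminus\{x\}$ (the ordered property is preserved) and, by induction, that it represents $\calF\vert_{\{x\},\emptyset}$. The size bound then follows from a simple accounting: every node of the (unreduced) output is either a fresh node associated bijectively to a distinct input node of label $<x$, or a re-used input node of label $>x$; these two pools of nodes are disjoint, so the output has at most $Z_<(\calF)$ internal nodes, and passing to the reduced form only lowers this count. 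Conditioning $x$ to \emph{false} is analogous and easier: a root labeled $x$ is replaced by its lo-child, nodes of label $>x$ are left untouched, and again no new nodes are created beyond images of the $<x$-labelled part.

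Since conditioning on distinct elements commutes and composes correctly, $\calF\vert_{Y,Y'}$ is obtained from $\calF$ by applying $|Y|+|Y'|\le|X|=\order{f(m)}$ successive single-element conditionings (to \emph{true} for the elements of $Y$, to \emph{false} for those of $Y'$), none of which increases the ZDD size. Hence, for \emph{any} order $<$ of $X$, writing $<'$ for its restriction to $X\setminus(Y\cup Y')$, we obtain $Z_{<'}(\calF\vert_{Y,Y'})\le Z_<(\calF)$. Combining this with the hypothesis $Z_{<'}(\calF\vert_{Y,Y'})=\morder{g(m)}$ gives $Z_<(\calF)=\morder{g(m)}$, which in particular is $\morder{g(m)/f(m)}$, proving the lemma. (If one instead wants to rely only on the weaker black-box statement that conditioning is polynomial-time, a single conditioning costs at most $\order{f(m)\cdot Z_<(\calF)}$, and the identical computation still yields the stated $\morder{g(m)/f(m)}$ bound; this is presumably why the lemma is phrased with the $f(m)$ loss.)

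The only real obstacle I anticipate is the size bound for conditioning-to-\emph{true}: one has to be careful that the zero-suppression rule does not covertly create extra nodes on branches where the variable $x$ is ``skipped'', and that the recursion really is a node-to-node transformation rather than a path-dependent one, so that the disjoint-pools accounting is legitimate. The remaining ingredients — commutativity and correctness of iterated conditioning, well-definedness of the restricted order, and the asymptotic bookkeeping at the end — are routine.
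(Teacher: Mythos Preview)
Your argument is correct and in fact yields a sharper bound than the paper's own proof. The paper proceeds differently: rather than cofactoring element by element, it builds the ZDD of the auxiliary family $\calG=\{S\subseteq X\mid Y\subseteq S,\ S\cap Y'=\emptyset\}$, observes that $Z_<(\calG)=\order{f(m)}$, takes the intersection $\calF\cap\calG$ (which is where the multiplicative $f(m)$ blow-up enters, via the standard product bound for intersection), and then strips out the $Y$-labelled nodes, a step that is size-nonincreasing. This establishes only $Z_<(\calF\vert_{Y,Y'})=\order{f(m)\cdot Z_<(\calF)}$, from which the stated $\morder{g(m)/f(m)}$ follows. Your direct cofactoring argument bypasses the intersection step entirely and gives $Z_{<'}(\calF\vert_{Y,Y'})\le Z_<(\calF)$, hence the stronger conclusion $Z_<(\calF)=\morder{g(m)}$; as you note, this subsumes the lemma as stated. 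The paper's route has the virtue of being a two-line reduction to black-box ZDD operations already established in the text, whereas yours requires spelling out the node-level transformation but rewards that effort with a tight bound and makes explicit that the $f(m)$ loss in the lemma's statement is an artefact of the proof technique rather than intrinsic.
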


If this lemma holds, we can show that the resultant families in Section~\ref{ssec:main} all have an exponential ZDD size regardless of the order of elements.
This is because the resultant families in Section~\ref{ssec:main} all have a hidden weighted bit function, a permutation function, or its complements as a subfamily and all of them have an exponential ZDD size regardless of the order of elements; a detailed discussion is given later.
\begin{proof}[Proof of Lemma~\ref{lem:order}]
  If we can show $Z_<(\calF\vert_{Y,Y^\prime})=\order{Z_<(\calF)f(m)}$ for any $Y,Y^\prime\subseteq X$ and any order $<$ of elements, Lemma~\ref{lem:order} can be proved as follows:
  Suppose that there is an order $<$ of elements satisfying $Z_<(\calF)=\sorder{g(m)/f(m)}$.
  Then, by the above equation, we have $Z_<(\calF\vert_{Y,Y^\prime})=\sorder{(g(m)/f(m))\cdot f(m)}=\sorder{g(m)}$.
  This contradicts the assumption that $Z_<(\calF\vert_{Y,Y^\prime})=\morder{g(m)}$ for any order $<$ of elements.

  Next, we fix an arbitrary order $<$ of elements and show $Z_<(\calF\vert_{Y,Y^\prime})=\order{Z_<(\calF)f(m)}$.
  Here, we consider the operations for constructing a ZDD representing $\calF\vert_{Y,Y^\prime}$ from the ZDD of $\calF$.
  We first extract the sets that contain every element of $Y$ but do not contain any element of $Y^\prime$.
  Then, we eliminate all elements of $Y$.

  The former step can be achieved by the intersection operation.
  Let $\calG$ be the family of subsets of $X$ such that $S\in\calG$ if and only if $S$ contains all of the elements in $Y$ but does not contain any element in $Y^\prime$.
  In other words, $\calG\coloneqq\{S\subseteq X\mid S\cap Y=Y\wedge S\cap Y^\prime=\emptyset\}$.
  Then, $\calF\cap\calG$ is the desired family.
  The ZDD representing $\calG$ has the following form:
  (i) For any $x\in Y$, there is only one ZDD node labeled $x$ whose lo-child is $\bot$ while its hi-child is the next-level node.
  (ii) For any $x\in Y^\prime$, there is no node labeled $x$ by the reduction rule of ZDD.
  (iii) for any $x\in X\setminus (Y\cup Y^\prime)$, there is only one ZDD node labeled $x$ whose lo-child and hi-child are both the next-level node.
  Thus, we have $Z_<(\calG)=O(f(m))$ because the base set $X$ of $\calF$ has $\order{f(m)}$ elements and, for any element $x\in X$, there is at most one node labeled $x$.
  Finally, $Z_<(\calF\cap\calG)=O(Z_<(\calF)f(m))$.

  The latter can be achieved by eliminating the nodes labeled $x\in Y$ and replacing the branches heading it.
  For a node labeled $x\in Y$, its lo-child must be $\bot$, since the ZDD is reduced and every set in $\calF\cap\calG$ must contain $x$.
  For this node, we first make all of the arcs heading to it point to its hi-child.
  Then, we eliminate this node.
  By performing this operation for every node labeled $x\in Y$, we finally obtain the ZDD of $\calF\vert_{Y,Y^\prime}$.
  Since this operation does not increase the size of ZDD, we have $Z_<(\calF\vert_{Y,Y^\prime})=\order{Z_<(\calF)f(m)}$.
\end{proof}

Now we can show that the resultant families in the proof of Section~\ref{ssec:main} have exponential ZDD size regardless of the order of elements.
For example, for the join operation, $(\{X\}\sqcup\calH_m)\vert_{X,\emptyset}=\calH_m$ and $Z(\calH_m)=\morder{2^{m/5}/m}$ for any order of elements of $Y$ (and thus that of $X\cup Y$).
Therefore, by Lemma~\ref{lem:order}, $Z(\calF_m\sqcup\calG_m)=\morder{2^{m/5}/m^2}$ for any order of elements of $X\cup Y$.
Similar arguments hold for the other operations. 
We here show that all the resultant families in the proof of Section~\ref{ssec:main} have exponential ZDD size regardless of the order of elements.
\begin{description}
  \item[Disjoint join $\djoin$ and joint join $\jjoin$:] The resultant family of these operations in the proof of Theorem~\ref{thm:join} is $(\{X\}\sqcup\calH_m)$. Here, $(\{X\}\sqcup\calH_m)\vert_{X,\emptyset}=\calH_m$.
  \item[Meet $\sqcap$:] In the proof of Theorem~\ref{thm:join}, we already have $\calF_m\sqcap\calG_m=\calH_m$. Thus, $Z(\calF_m\sqcap\calG_m)=\morder{2^{m/5}/m}$ for any order of elements.
  \item[Delta $\ddelta$:] In the proof of Theorem~\ref{thm:join}, we have $\calF_m\ddelta\calG_m=2^X\sqcup\calH_m$. Since $(2^X\sqcup\calH_m)\vert_{X,\emptyset}=\calH_m$, $Z(\calF_m\ddelta\calG_m)=\morder{2^{m/5}/\poly(m)}$ for any order of elements.
  \item[Quotient $\sdiv$:] $Z(2^Y)=\order{m}$ and $Z(\calH_m)=\morder{2^{m/5}/m}$ for any order of elements, and $Z(\calH^\comp_m)=\morder{2^{m/5}/m^2}$ for any order of elements by Lemma~\ref{lem:difference}.
  This also holds for $\calF_m\sdiv\calG_m$ in the proof of Theorem~\ref{thm:quotient} since it equals $\calH^\prime_m$.
  \item[Remainder $\modulus$:] Since $\calF_m=\bigcup_{k}(\{\{x_k\}\}\sqcup\calE^\comp_{m,k})$ and $\calG_m\sqcup(\calF_m\sdiv\calG_m)=\{\{x_1\},\ldots,\{x_m\}\}\sqcup\calH^\comp_m$, $\calF_m\modulus\calG_m=\bigcup_{k}(\{\{x_k\}\}\sqcup(\calE^\comp_{m,k}\setminus\calH^\comp_m))$.
  Thus, $(\calF_m\modulus\calG_m)\vert_{\{x_1\},X\setminus\{x_1\}}=\calE^\comp_{m,1}\setminus\calH^\comp_m$.
  Here, $Z(\calE^\comp_{m,1})=\order{m^2}$ and $Z(\calH^\comp_m)=\morder{2^{m/5}/m^2}$ for any order of elements, and $Z(\calE^\comp_{m,1}\setminus\calH^\comp_m)=\morder{2^{m/5}/m^4}$ for any order of elements by Lemma~\ref{lem:difference}.
  Thus, by Lemma~\ref{lem:order}, $Z(\calF_m\modulus\calG_m)=\morder{2^{m/5}/m^6}$ for any order of elements because it is a family of subsets of a set with $\order{m^2}$ elements.
  \item[Permit $\oslash$ and nonsubset $\nonsubset$:] We already have $\calF_m\oslash\calG_m=\calC_m\setminus\calP_m$ and $\calF_m\nonsubset\calG_m=\calP_m$ in the proof of Theorem~\ref{thm:restrict}.
  Since $Z(\calC_m)=\order{m^3}$ and $\calP_m=\morder{2^m/m}$ for any order of elements, $Z(\calC_m\setminus\calP_m)=\morder{2^m/m^4}$ for any order of elements by Lemma~\ref{lem:difference}.
  \item[Restrict $\bigtriangleup$ and nonsuperset $\nonsupset$:] We have $(\{X\}\sqcup(\calC_m\setminus\calP_m))\vert_{X,\emptyset}=\calC_m\setminus\calP_m$ and $(\{X\}\sqcup\calP_m)\vert_{X,\emptyset}=\calP_m$; see the proof of Theorem~\ref{thm:restrict}.
  \item[Maximal ${}^\uparrow$:] In the proof of Theorem~\ref{thm:maximal}, we have $\calF_m^\uparrow\vert_{\emptyset,\{w\}\cup X}=\calP_m$.
  \item[Minimal ${}^\downarrow$:] In the proof of Theorem~\ref{thm:maximal}, we have $\calF_m^\downarrow\vert_{\{w\}\cup X,\emptyset}=\calP_m$.
  \item[Minimal hitting set ${}^\sharp$:] In the proof of Theorem~\ref{thm:hitting}, we already have $\calF_m^\sharp=\calP_m$.
  \item[Closure ${}^\cap$:] In the proof of Theorem~\ref{thm:hitting}, we have $\calF_m^\cap\cap\calC_m=\calP_m$.
  Since $Z(\calC_m)=\order{m^3}$ for any order of elements, $Z(\calF_m^\cap)=\morder{2^m/\poly(m)}$ for any order of elements.
\end{description}

\subsection{Polynomially Bounded ZDDs}
\label{ssec:polyZDD}
We complete the proof of this section by showing that the ZDD sizes of some families appearing in the previous proofs are bounded by a polynomial of $m$.
To prove the size bound, we consider the following \emph{linear network model} to distinguish whether a set is contained in the family $\calF$.
Note that the idea of a linear network model comes from Knuth's book~\cite[Theorem M]{knuth11}, where it was used to prove the bound of BDD size.
Suppose that the order of elements is $x_1<x_2<\cdots<x_n$.
There are $n$ computational modules $M_1,\ldots,M_n$.
Module $M_i$ receives an input of one bit indicating whether $x_i$ is included in the set.
Module $M_i$ sends $a_{i+1}$ bits of information to module $M_{i+1}$.
Overall, every module $M_i$ receives an input $x_i$ and $a_i$ bits of information from $M_{i-1}$ and sends $a_{i+1}$ bits of information to $M_{i+1}$.
Since module $M_1$ has no preceding module, we set $a_1=0$.
The final module, $M_n$, outputs one bit indicating whether the set is included in the family $\calF$.
An overview of the linear network model is drawn in Figure~\ref{fig:linearnet}.
The following lemma suggests that if we can construct a small linear network for the family $\calF$, the ZDD size of $\calF$ can be bounded.
\begin{lemma}
  \label{lem:linear}
  For family $\calF$ of subsets of $\{x_1,\ldots,x_n\}$, assume that we can construct the linear network model described above to distinguish whether a set is contained in $\calF$.
  Then, the size of ZDD representing $\calF$ is bounded by $Z(\calF)\leq 2+\sum_{i=1}^{n}2^{a_i}$.
\end{lemma}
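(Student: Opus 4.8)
The plan is to turn the given linear network into an (a priori non-reduced) ZDD for $\calF$ in which level $i$ carries at most $2^{a_i}$ nodes, and then to invoke the fact that the reduced ZDD is the smallest ZDD representing $\calF$ under the fixed order.

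First I would set up a correspondence between ZDD nodes and network states. Writing $\delta_i(s,b)\in\{0,1\}^{a_{i+1}}$ for the $a_{i+1}$ bits that module $M_i$ emits when it receives the incoming message $s\in\{0,1\}^{a_i}$ and the input bit $b$ (and $\delta_n(s,b)\in\{0,1\}$ for the final output bit of $M_n$), I would create, for every $i\in\{1,\ldots,n\}$ and every string $s\in\{0,1\}^{a_i}$ reachable at $M_i$, a node $v_{i,s}$ with $\ZDDlabel(v_{i,s})=x_i$, and set $\ZDDlo(v_{i,s}):=v_{i+1,\delta_i(s,0)}$, $\ZDDhi(v_{i,s}):=v_{i+1,\delta_i(s,1)}$, with the convention $v_{n+1,0}:=\bot$ and $v_{n+1,1}:=\top$. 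Since $a_1=0$ there is a unique level-$1$ node $v_{1,\epsilon}$ ($\epsilon$ the empty string), which I take as the root. Every arc goes from level $i$ to level $i+1$ or to a terminal, and $x_i<x_{i+1}$ in the fixed order, so this is a valid ordered ZDD; restricting to nodes reachable from $v_{1,\epsilon}$ ensures $v_{1,\epsilon}$ is the only node with no incoming arc.

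Next I would prove by reverse induction on $i$ (base case $i=n+1$, then $i=n,\ldots,1$) that $\calF_{v_{i,s}}$ equals the family of all $T\subseteq\{x_i,\ldots,x_n\}$ such that feeding $s$ into $M_i$ and the input bits ``$x_j\in T$'' into $M_i,\ldots,M_n$ makes $M_n$ output $1$. The base case is exactly $\calF_{\top}=\{\emptyset\}$, $\calF_{\bot}=\emptyset$. For the step, expand $\calF_{v_{i,s}}=\calF_{\ZDDlo(v_{i,s})}\cup(\{\{x_i\}\}\sqcup\calF_{\ZDDhi(v_{i,s})})$ via the ZDD semantics: the $\ZDDlo$-term corresponds to $x_i\notin T$ (input bit $0$, next message $\delta_i(s,0)$) and the $\ZDDhi$-term to $x_i\in T$ (input bit $1$, next message $\delta_i(s,1)$), so the induction hypothesis at level $i+1$ closes the case. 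Instantiating at $i=1$, $s=\epsilon$, and using that the network decides membership in $\calF$, gives $\calF_{v_{1,\epsilon}}=\calF$.

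Finally I would count: the constructed ZDD has at most $2^{a_i}$ nodes labeled $x_i$ for each $i$, plus the two terminals, hence at most $2+\sum_{i=1}^n 2^{a_i}$ nodes; as the reduced ZDD of $\calF$ is no larger, $Z(\calF)\le 2+\sum_{i=1}^n 2^{a_i}$. I do not anticipate a real obstacle; the only points needing care are correctly matching the ZDD semantics in the inductive step (in particular $\top\leftrightarrow\{\emptyset\}$ and the effect of $\{\{x_i\}\}\sqcup{}$) and verifying that the constructed graph is genuinely an ordered ZDD with a single source, which is why I build it level by level and keep only reachable states.
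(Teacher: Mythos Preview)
Your proof is correct. The paper takes a slightly different but closely related route: rather than explicitly constructing a non-reduced ZDD from the network and then invoking minimality of the reduced form, it argues directly about the reduced ZDD, bounding the number of nodes labeled $x_i$ by the number of distinct conditionings $\calF\vert_{X,Y}$ with $X\cup Y=\{x_1,\ldots,x_{i-1}\}$ and observing that any such conditioning is determined by the $a_i$-bit message entering $M_i$. Your constructive approach has the virtue of making the witnessing ZDD explicit (handy if one actually wants to build it), while the paper's semantic counting sidesteps the inductive verification that your constructed diagram has the intended family semantics. Both arguments rest on the same observation---that the $2^{a_i}$ possible incoming messages at $M_i$ upper-bound the branching at level $i$---so neither is materially harder or more general.
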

\begin{proof}
  For $k=1,\ldots,n$, we consider the number of distinct subfamilies $\calF\vert_{X,Y}$, where $X\cup Y=\{x_1,\ldots,x_{k-1}\}$.
  This is because by the node sharing rule, the number of nodes labeled $x_k$ is upper-bounded by the number of possible distinct subfamilies.

  We observe that the input to module $M_i$ is $a_i$ bits.
  This means that, regardless of the inclusion of $x_1,\ldots,x_{k-1}$, the subfamily $\calF\vert_{X,Y}$ is completely determined by the information of $a_i$ bits.
  Therefore, there are at most $2^{a_i}$ distinct subfamilies, yielding the result that the number of nodes labeled $x_k$ is upper-bounded by $2^{a_i}$.
  Since there are two terminal nodes $\top$ and $\bot$, the overall ZDD size is bounded by $Z(\calF)\leq 2+\sum_{i=1}^{n}2^{a_i}$.
\end{proof}

\begin{figure}
    \centering
    \includegraphics{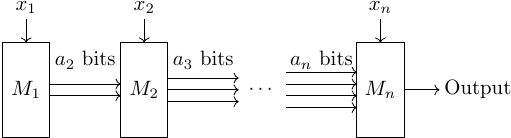}
    \caption{Schematic overview of linear network model.}
    \label{fig:linearnet}
\end{figure}

By Lemma~\ref{lem:linear}, we only have to consider a small linear network for every family.
\begin{description}
\item[$\calE_{m,k}$ in Section~\ref{ssec:highlevel}:]
The family $\calE_{m,k}$ is defined as $\{S\subseteq\{y_1,\ldots,y_m\}\mid |S|=k,y_k\in S\}$.
In judging whether $S\in\calE_{m,k}$ with a linear network, the module $M_t$ is only concerned with the number of elements from $y_1,\ldots,y_t$ in $S$ and whether $y_k$ is in $S$.
The former information can be represented with $\lceil\log (m+1)\rceil$ bits and the latter can be represented with 1 bit.
Thus, we can construct a linear network with $a_t=\lceil\log (m+1)\rceil + 1$ bits.
By Lemma~\ref{lem:linear}, we have $Z(\calE_{m,k})\leq 2+m2^{\lceil\log (m+1)\rceil+1}=\order{m^2}$.
\item[$\calQ_{m,k}$ in Section~\ref{ssec:highlevel}:]
Each of the families $\calQ_{m,k}$ $(k=1,\ldots,2m)$ is the family of subsets of $\{y_1,\ldots,y_{m^2}\}$ such that there is exactly one element from a set of $m$ selected elements.
In constructing a linear network, the module $M_t$ is only concerned with the number of selected elements in $S$: zero, one, or more than one.
This information can be represented with $2$ bits.
Thus, we have $Z(\calQ_{m,k})\leq 2+m^22^2=\order{m^2}$.
\item[$\calE_{m,k}^\comp$ in Section~\ref{sssec:quotient}:]
The linear network for $\calE_{m,k}^\comp=2^Y\setminus \calE_{m,k}$ can be the same as that for $\calE_{m,k}$, except that the output is inverted.
Thus, $Z(\calE_{m,k}^\comp)=\order{m^2}$.
\item[$\calC_m$ in Section~\ref{sssec:restrict}:]
The family $\calC_m$ is defined as $\{S\subseteq\{y_1,\ldots,y_{m^2}\}\mid |S|=m\}$. Similar to the case of $\calQ_{m,k}$, every module only retains the number of elements from $y_1,\ldots,y_t$ in $S$.
Moreover, we should only count this number until $m$; if the count exceeds $m$, we can immediately determine that $S$ is not in $\calC_m$.
This count value can be represented with $\lceil\log(m+2)\rceil$ bits.
Thus, we have $Z(\calC_m)\leq 2+m^22^{\lceil\log(m+2)\rceil}=\order{m^3}$.
\item[$\calT_{m,k}$ in Section~\ref{sssec:restrict}:]
For $\calT_{m,k}=\calC_m\setminus\calQ_{m,k}$, we can construct a linear network by combining the networks for $\calC_m$ and $\calQ_{m,k}$.
We have $\lceil\log(m+2)\rceil$ bits for $\calC_m$ and $2$ bits for $\calQ_{m,k}$.
Thus, we have $Z(\calT_{m,k})\leq 2+m^22^{\lceil\log(m+2)\rceil+2}=\order{m^3}$.
\end{description}

We finally note that the ZDD sizes of the above families remain polynomial in $m$ even if the order of elements is different from $y_1<y_2<\cdots<y_m<\cdots<y_{m^2}$.
Since the cardinality constraint is symmetric, we can reuse the same linear network for different orders of elements.
The existence of specific elements can also be treated by changing the input that is watched.

\subsection{Discussion}
\label{ssec:discussion}
Finally, we give some discussions for the presented results.
First, we argue theoretical results for BDDs.
As stated in Lemma~\ref{lem:sizeBZ}, the sizes of BDD and ZDD differ only by a linear factor of the size of the base item set.
All the results in Section~\ref{ssec:main} have the same form that the number of elements is $\order{\poly(m)}$, the input ZDD sizes are $\order{\poly(m)}$, and the output ZDD size is exponential in $m$.
Therefore, even if these families are represented by BDDs, the input BDD sizes are all $\order{\poly(m)}$, and the output BDD sizes are all exponential in $m$.
Moreover, the output BDD sizes remain exponential in $m$ for any order $<$ of elements since Lemma~\ref{lem:sizeBZ} holds for any order $<$ of elements.
This constitutes the theoretical result that the family algebra operations in Table~\ref{tb:operations}, except for the first four operations, cannot be performed in polynomial time in the input BDD sizes.

Second, we discuss how often such exponential blow-up occurs.
Although we rely on specific families, the hidden weighted bit function $\calH_m$ and the permutation function $\calP_m$, the heart of the above proofs is that even a single operation may cause us to compute the union or intersection of multiple subfamilies.
Apart from these families, it is usual that taking the union or intersection of multiple families leads to exponential blow-up.
To imagine this, we consider encoding a family described by polynomial-sized conjunctive normal form (CNF) into BDD/ZDD.
Every clause can be encoded into a polynomial-sized BDD/ZDD.
Moreover, if the entire CNF is encoded into BDD/ZDD, we can solve SAT, or even more difficult \#SAT, in linear time with respect to the size of BDD/ZDD~\cite{knuth11}.
However, it is a famous fact that SAT and \#SAT are in NP-complete and \#P-complete, respectively, meaning that they are believed not to be solved in polynomial time.
This means that for many CNFs, the BDD/ZDD after taking intersection of clauses does not remain polynomial-sized.
Therefore, apart from the specific examples used in the proof, there are many cases yielding the blow-up of BDD/ZDD size after single family algebra operation.

Finally, we argue the limitation of some of the above results that the permutation function is not such a ``devilish'' example.
The permutation function is a family of subsets of a set with $\order{m^2}$ elements and its ZDD size can only be lower bounded by $\morder{2^m/\poly(m)}$.
Since the ZDD size of the family of subsets of a set with $\order{m^2}$ can be at most $\morder{2^{m^2}/\poly(m)}$, it is far from being the worst-case.
We should investigate whether there is a family of sets generated by restrict or similar operations whose ZDD size is lower bounded by $\morder{\alpha^n/\poly(n)}$, where $\alpha>1$ and $n$ is the number of elements in the base set.

\section{Conclusion}
We proved that the worst-case complexity of carrying out certain kinds of a family algebra operation on BDDs/ZDDs once is lower bounded by an exponential factor.
These include all of the operations raised by Knuth~\cite[\S 7.1.4 Ex. 203,204,236,243]{knuth11} except for the basic set operations.
In particular, we resolved the controversy over the complexity of the join operation, which had arisen prominently in past literature.
We also resolved the open problem regarding the worst-case complexity of the quotient operation.

Future directions include the followings.
First, we only prove the lower-bound of the complexity of carrying out a single operation.
It should be investigated whether we can obtain a non-trivial upper-bound of the complexity.
Second, it is unknown whether a ``double recursion'' procedure like those in Section~\ref{ssec:familyZDD} always leads to an exponential worst-case complexity.
It is important to investigate whether there are non-trivial operations that should require a double recursion procedure even though the worst-case complexity is polynomial.



\bibliography{mybib}

\end{document}